\renewcommand\footnotetextcopyrightpermission[1]{} %
\newcommand{\nopVLDB}[1]{}
\newcommand*\rot{\rotatebox{90}}
\DeclareMathAlphabet{\mathpzc}{OT1}{pzc}{m}{it}
\newcommand{\cal}{\mathcal}
\newcommand{\calO}{{\mathcal O}}
\newcommand{\ra}{\ensuremath{\rightarrow}}
\newcommand{\ffo}[1]{\ensuremath{\mathit{#1}}}
\newcommand{\tw}{\ffo{tw}}
\newcommand{\icBMIP}{$ic$-BMIP}
\newcommand{\Sp}{\ffo{Sp}}
\newcommand{\fhw}{\ffo{fhw}}
\newcommand{\ghw}{\ffo{ghw}}
\newcommand{\hw}{\ffo{hw}}
\newcommand{\weight}{\ffo{weight}}
\newcommand{\newdetkdecomp}{\texttt{NewDetKDecomp}}
\newcommand{\detkdecomp}{\texttt{DetKDecomp}}
\newcommand{\globalbip}{\texttt{GlobalBIP}}
\newcommand{\localbip}{\texttt{LocalBIP}}
\newcommand{\balsep}{\texttt{BalSep}}
\newcommand{\improvehd}{\texttt{SimpleImproveHD}}
\newcommand{\fracimprovehd}{\texttt{FracImproveHD}}
\newcommand{\iwidth}[1]{\mbox{\it iwidth}(#1)}
\newcommand{\cmiwidth}[2]{\mbox{\it $#1$-miwidth(#2)}}
\newcommand{\VTu}{V(T_u)}
\newcommand{\rec}[1]{\textsc{Check}(#1)}
\newcommand{\classC}{\ensuremath{\mathscr{C}}}
\newcommand{\HH}{\ensuremath{H}}
\newcommand{\np}{{\sf NP}\xspace}
\newif\ifrdfshort
\newcommand{\nop}[1]{}
\newlength{\knotenbreite}
\newlength{\knotenhoehe}
\newlength{\ueberhoehe}
\newlength{\knotenradius}
 \setlist[itemize,1]{leftmargin=\dimexpr 26pt-.15in}
\newenvironment{myintro}%
  {\list{}{\leftmargin=0.1in\rightmargin=0.1in}\item[]}%
  {\endlist}
\title{HyperBench: A Benchmark and Tool for Hypergraphs \\
and Empirical Findings}
\author{Wolfgang Fischl$^1$, Georg Gottlob$^{1,3}$, Davide M. Longo$^{1,2}$, and Reinhard Pichler$^1$}
\affiliation{%
  \institution{$^1$TU Wien, $^2$Universit\`{a} della Calabria, and $^3$University of Oxford}
}
\begin{document}

\begin{abstract}
To cope with the intractability of answering Conjunctive 
Queries (CQs) and solving Constraint Satisfaction Problems (CSPs), several notions of hypergraph decompositions have been proposed -- giving rise to different notions of width, noticeably, plain, generalized, and fractional hypertree width (hw, ghw, and fhw). Given the increasing interest in using such decomposition methods in practice, a publicly accessible repository of decomposition software, as well as a large set of benchmarks, and a web-accessible workbench for inserting, analysing, and retrieving hypergraphs are called for. 

We address this need by providing (i) concrete implementations of hypergraph decompositions (including new practical algorithms), (ii) a new, comprehensive benchmark of hypergraphs stemming from disparate CQ and CSP collections, and (iii) HyperBench, our new web-inter\-face for accessing the benchmark and the results of our analyses. In addition, we describe a number of actual experiments we carried out with this new infrastructure.
\end{abstract}

\maketitle

\nocite{DBLP:journals/siamcomp/AtseriasGM13,malyshevgetting}

\section{Introduction}
\label{sect:introduction}

In this work we study computational problems on hypergraph decompositions 
which are designed to speed up 
the evaluation of Conjunctive Queries (CQs) 
and the solution of Constraint Satisfaction Problems (CSPs).
Hypergraph decompositions have 
meanwhile found their way into commercial database systems such as LogicBlox 
\cite{DBLP:conf/sigmod/ArefCGKOPVW15,
olteanu2015size,BKOZ13,KhamisNRR15,KhamisNR16} and advanced research prototypes 
such as 
EmptyHeaded~\cite{DBLP:journals/tods/AbergerLTNOR17,aberger2016old,tu2015duncecap,Duncecap15a}.  
Hypergraph decompositions have also been successfully used in 
the CSP area \cite{DBLP:journals/aicom/AmrounHA16,DBLP:journals/jetai/HabbasAS15,LalouHA09}.
In theory, 
the pros and cons of various notions of decompositions and widths are well understood 
(see \cite{DBLP:conf/pods/GottlobGLS16} for a survey).
However, from a practical point of view, many questions have remained~open. 

We want to analyse the hypertree width ($\hw$) of hypergraphs  from different application contexts. 
The investigation of millions of CQs \cite{DBLP:journals/pvldb/BonifatiMT17,DBLP:conf/sigmod/PicalausaV11}
posed at various SPARQL endpoints suggests that these real-world CQs with atoms of arity $\leq 3$ 
have very low $\hw$: the overwhelming majority is acyclic; almost all of the rest
has $\hw = 2$. It is, however, not clear if CQs with arbitrary arity and CSPs also have 
low hypertree width,  
say, $\hw \leq 5$. Ghionna et al.\ \cite{DBLP:conf/icde/GhionnaGGS07} gave a positive answer to this question 
for a small set of TPC-H  
benchmark \linebreak
queries. We significantly extend their collection of CQs.

Answering CQs and solving CSPs are fundamental tasks in Computer Science. Formally, they are the same problem, since both  correspond to the evaluation of first-order formulae over a finite structure, such that the formulae only use $\{\exists,\wedge\}$ 
as connectives
but not $\{\forall, \vee, \neg\}$.
Both problems, answering CQs and solving CSPs, 
are \np-complete \cite{DBLP:conf/stoc/ChandraM77}. 
Consequently, the search for tractable fragments of these problems has been an active research area in the database and artificial intelligence communities for several decades. 

The most powerful methods 
known to date 
for defining tractable fragments are based on various decompositions of the 
hypergraph structure 
underlying 
a given CQ or CSP. 
\nopVLDB{*****************************
Here, ``tractability'' means solvability in polynomial time (for boolean CQs and CSPs) or the enumeration of all answers (to a given CQ) or of all solutions (to a given CSP) with polynomial delay.
*****************************}
The most important forms of decompositions 
are 
{\it hypertree decompositions (HDs)\/}  
\cite{2002gottlob}, 
{\it generalized 
hypertree decompositions (GHDs)}~\cite{2002gottlob}, and 
{\it fractional hypertree decompositions (FHDs)}~\cite{2014grohemarx}.
These decomposition methods give rise to three
notions of width of a hypergraph $H$:
the  {\it hypertree width} $\hw(H)$,  
{\em generalized hypertree width} $\ghw(H)$, 
and {\em fractional hypertree width} 
$\fhw(H)$, 
where, $\fhw(H)\leq \ghw(H)\leq \hw(H)$ holds for every 
hypergraph $H$. 
For definitions, see Section~\ref{sect:prelim}.

Both, answering CQs and solving CSPs,
become tractable if 
the underlying hypergraphs have bounded 
$\hw$, $\ghw$, or, $\fhw$ and an appropriate decomposition is given. 
This gives rise to the 
problem of recognizing if a given CQ or CSP has $\hw$, $\ghw$, or, $\fhw$ bounded by 
some constant $k$. 
Formally, for 
{\it decomposition\/} $\in \{$HD, GHD, FHD$\}$ and $k \geq 1$, 
we consider the following family~of~problems:

\smallskip 
\noindent
\rec{{\it decomposition\/},\,$k$}\\
\begin{tabular}{ll}
 \bf input & hypergraph $H = (V,E)$;\\
 \bf output & {\it decomposition\/}  of $H$ of width $\leq k$ if it 
exists and  \\ & answer `no' otherwise.
\end{tabular}
\smallskip 

Clearly, bounded $\fhw$ defines the 
largest tractable class 
while bounded 
$\hw$ defines the smallest one. 
On the other hand, the problem
\rec{HD,\,$k$}\  is feasible in polynomial time
\cite{2002gottlob}
while
the problems \rec{GHD,\,$k$} \cite{2009gottlob} 
and 
\rec{FHD,\,$k$} \cite{pods/FischlGP18} 
are \np-complete even for $k = 2$.

Systems to solve 
the \rec{HD,\,$k$} problem 
exist \cite{DBLP:journals/jea/GottlobS08,DBLP:journals/jcss/ScarcelloGL07}. 
In contrast, 
for the problems \rec{GHD,\,$k$} 
and 
\rec{FHD,\,$k$}, 
apart from exhaustive search over possible decomposition trees 
\linebreak
(which only works for small hypergraphs),  
no implementations have been reported yet \cite{DBLP:journals/tods/AbergerLTNOR17} -- with one exception: 
very recently, an interesting approach is presented in \cite{CP/FichteHLS18},
where SMT-solving is 
\linebreak
applied to the \rec{FHD,\,$k$} problem.
In \cite{DBLP:journals/jea/GottlobS08}, 
tests of the \linebreak 
\rec{HD,\,$k$} system 
are presented. However, 
a benchmark for systematically evaluating systems 
for the 
\rec{{\it decomposition\/},\,$k$} problem
with {\it decomposition\/} $\in \{$HD, GHD, FHD$\}$ and $k \geq 1$
were missing so far.
This motivates our first research goals.

\begin{myintro}
\noindent{\bf Goal 1:} Create a comprehensive,  easily extensible benchmark of hypergraphs 
corresponding to CQs or CSPs for the analysis of hypergraph decomposition algorithms.
\end{myintro}

\begin{myintro}
\noindent{\bf Goal 2:} Use the %
benchmark from Goal 1 to find out if the hypertree width is, in general, small enough (say $\leq 5$) to allow for efficient evaluation of CQs of arbitrary arity and of CSPs. 
\end{myintro} 

Recently, in  \cite{pods/FischlGP18}, 
the authors have identified classes of CQs for which the 
\rec{GHD,\,$k$}  and \rec{FHD,\,$k$} problems
become tractable
(from now on, we only speak about CQs; of course, all results apply equally to CSPs). To this end, the Bounded 
Intersection Property (BIP) and, more generally, the 
Bounded Multi-Intersection Property (BMIP)
have been introduced. 
The maximum number $i$ of attributes shared by two 
(resp.\ $c$) atoms is referred to as the 
intersection width (resp.\ $c$-multi-intersection width) of the CQ, which 
is similar to the notion of cutset width from the CSP literature \cite{dechter2003}. 
We say that a class of CQs satisfies the BIP (resp.\ BMIP) 
if the number of attributes shared by two
(resp.\ by a constant number $c$ of) query atoms is bounded by 
some constant $i$. 

A related property is that of bounded degree, i.e., 
each attribute only occurs in a constant number of query atoms. Clearly, the BMIP
is an immediate consequence of bounded degree.
It has been shown in \cite{pods/FischlGP18}
that \rec{GHD,\,$k$} is solvable in polynomial time for 
CQs whose underlying hypergraphs satisfy the BMIP. 
For CQs, the BMIP and bounded degree seem natural restrictions. For CSPs, the situation is not so clear. 
This yields the following research goals.

\begin{myintro}
\noindent{\bf Goal 3:} Use the hypergraph benchmark from Goal~1 to 
analyse how realistic the restrictions to low (multi-)inter\-section width,
or low degree of CQs and CSPs are.
\end{myintro} 

\begin{myintro}
\noindent{\bf Goal 4:} Verify 
that the tractable fragment of the \rec{GHD,\,$k$} problem 
given by hypergraphs of low intersection width
indeed allows for efficient algorithms that work well in practice.
\end{myintro}

The tractability results for \rec{FHD,\,$k$} \cite{pods/FischlGP18,FGP2017tractablefhw}
are significantly weaker than for 
\rec{GHD,\,$k$}: they involve a factor which is at least double-exponential 
in some ``constant'' (namely $k$, the bound $d$ on the degree and/or the bound $i$ on the intersection-width). Hence, we want to investigate if (generalized) hypertree decompositions 
could be ``fractionally improved'' 
by taking the integral edge cover at each node in the 
HD or GHD and replacing it by a fractional edge cover. 
We will thus introduce the notion of {\em fractionally improved\/} HD 
which checks if there exists an HD of width $\leq k$, 
such that replacing each integral cover by a fractional cover yields an FHD of width
$\leq k'$ for given bounds $k,k'$ with $0 < k' < k$.

\begin{myintro}
\noindent{\bf Goal 5:} Explore the potential of fractionally improved HDs, i.e., 
investigate if the improvements %
achieved are significant.
\end{myintro}

In cases where \rec{GHD,\,$k$} and 
\rec{FHD,\,$k$} are intractable, we may have to settle for 
good approximations of $\ghw$ and $\fhw$. 
For GHDs, we may thus use the 
inequality  $\ghw(H) \leq 3 \cdot \hw(H) +1$, which 
holds for every 
hypergraph $H$ \cite{DBLP:journals/ejc/AdlerGG07}.
In contrast, for FHDs, the best known general, polynomial-time approximation is cubic. More precisely, 
in \cite{DBLP:journals/talg/Marx10}, a polynomial-time algorithm is presented which, 
given a hypergraph $H$ with $\fhw(H) = k$,
computes an FHD of width $\calO(k^3)$. 
In \cite{pods/FischlGP18}, it is shown that a polynomial-time approximation up to a logarithmic factor is 
possible for any class of hypergraphs with bounded 
Vapnik--Chervonenkis dimension (VC-dimension; see
Section~\ref{sect:prelim} for a precise definition).
The problem of efficiently approximating the $\ghw$ 
and/or $\fhw$ 
leads us to the following goals.

\begin{myintro}
\noindent{\bf Goal 6:} Use the %
benchmark from Goal~1 to 
analyse if, in practice, $\hw$ and $\ghw$ indeed differ by factor 3 or, if $\hw$ is typically much closer to $\ghw$
than this worst-case bound.
\nopVLDB{than this worst-case bound.}
\end{myintro}

\begin{myintro}
\noindent{\bf Goal 7:} Use the %
benchmark from Goal~1 to 
analyse how realistic the restriction to small VC-dimen\-sion of 
CQs and CSPs is.
\end{myintro} 

\smallskip
\noindent
{\bf Results.}
By pursuing these goals, we obtain the following results:

\smallskip
$\bullet$ \
We provide {\em HyperBench\/}, a comprehensive hypergraph benchmark of %
initially over %
3,000 hypergraphs (%
see Section~\ref{sect:HyperBench}). 
This benchmark is exposed by a web interface, 
which allows the user to retrieve the hypergraphs or groups of hypergraphs together with a broad spectrum of properties of these hypergraphs, such as 
lower/upper bounds on $\hw$ and $\ghw$, (multi-)intersection width, degree, 
etc.

\smallskip

$\bullet$  \
We extend the software for HD computation from \cite{DBLP:journals/jea/GottlobS08}
to also solve the 
\rec{GHD,\,$k$} problem. 
For a given hypergraph $H$, our system first computes the intersection width of $H$ and then applies the 
$\ghw$-algo\-rithm from \cite{pods/FischlGP18},
which is parameterized by the
intersection width. We implement several improvements and 
we further extend the system to compute
also  ``fractionally improved''~HDs.

\smallskip

$\bullet$  \
We carry out an empirical analysis of the hypergraphs in the HyperBench benchmark. This analysis demonstrates, especially for real-world instances, 
that the restrictions to BIP, BMIP, bounded degree, and bounded VC-dimension are astonishingly realistic. Moreover,
on all hypergraphs in the HyperBench benchmark, we 
run our $\hw$- and $\ghw$-systems to identify 
(or at least bound) their $\hw$ and $\ghw$.
An interesting observation of our empirical study is that apart from the CQs 
also a significant portion of CSPs in ourbenchmark 
has small hypertree width (all non-random CQs have $\hw \leq 3$ and 
over 60\% of CSPs stemming from applications have 
$\hw \leq 5$). Moreover, for $\hw \leq 5$, 
in all of the 
cases where the $\ghw$-computation terminates, $\hw$ and $\ghw$ 
have identical values.

\smallskip

$\bullet$  \
In our study of the $\ghw$ of the hypergraphs in the HyperBench benchmark, 
we observed that a straightforward implementation of the algorithm from \cite{pods/FischlGP18} 
for hypergraphs of 
low intersection width is too slow in many cases. We therefore present a new approach (based on so-called
``balanced separators'') with promising experimental results. It is interesting to note that
the new approach 
works particularly well in those situations which are particularly hard for the straightforward implementation,
namely hypergraphs $H$ where the test if $\ghw \leq k$ for given $k$ gives a ``no''-answer. 
Hence, combining the different approaches is very effective.

\smallskip

\noindent 
{\bf Structure.} This paper is structured as follows:
In Section~\ref{sect:prelim}, we recall some basic definitions and results.  
In Section~\ref{sect:HyperBench}, we present our system and test environment as well as our hypergraph benchmark 
HyperBench. 
First results of our empirical study of the hypergraphs in 
this benchmark are presented in 
Section~\ref{sect:EmpiricalAnalysis}.
In Section~\ref{sect:ghw-implementation}, we describe our algorithms for solving the 
\rec{GHD,\,$k$} problem.
A further extension of the system to allow for the computation of fractionally improved HDs 
is described in Section~\ref{sect:fractionally-improved}.
Finally, in Section~\ref{sect:related} we summarize related work and
conclude in Section~\ref{sect:conclusion} by highlighting the most important lessons learned  from our empirical study and 
by identifying some appealing directions for future~work. 

\smallskip
\noindent
Due to lack of space, some of the statistics presented in the main body contain aggregated values (for instance, for different classes of CSPs). Figures and tables with more 
fine-grained results (for instance, distinguishing the 3 classes of CSPs to be presented
in Section~\ref{sect:EmpiricalAnalysis}) are provided in the appendix and will be made
publically  available in a full version of this paper in CoRR.

\section{Preliminaries}
\label{sect:prelim}

Let $\phi$ be a CQ or CSP (i.e., an FO-formula with connectives $\{\exists,\wedge\}$).
The {\em hypergraph corresponding to $\phi$} 
is defined as 
$H=(V(H),E(H))$, where the set of vertices $V(H)$ is defined as the set of variables in $\phi$ 
and the set of edges $E(H)$ is defined as $E(H) = \{ e \mid \phi$ contains an atom $A$, s.t.\
$e$ equals the set of variables occurring in $A\}$.

\smallskip

\noindent
{\bf Hypergraph decompositions and width measures.}
We consider here three notions of 
hypergraph decompositions with associated notions of width. To this end, we first need to introduce 
the notion of (fractional) edge covers:

Let $H = (V(H),E(H))$ be a hypergraph 
and consider a function
$\gamma \colon E(H) \ra [0,1]$. Then, we define  the set $B(\gamma)$ of all 
vertices covered by $\gamma$ and the weight of $\gamma$ as 
\begin{eqnarray*}
B(\gamma) & = &\left\{ v\in V(H) \mid \sum_{e\in E(H), v\in e} \gamma(e) \geq 1 
\right\}, \\
\ \weight(\gamma) & =& \sum_{e \in E(H)} \gamma(e).
\end{eqnarray*}
The special case of a function with values restricted to $\{0,1\}$, will 
be denoted by $\lambda$, i.e.,
$\lambda \colon E(H) \ra \{0,1\}$. 
Following \cite{2002gottlob}, we can also treat %
$\lambda$ as a set with $\lambda \subseteq E(H)$ 
(namely, the set of edges $e$ with $\lambda(e) = 1$)
and the weight as the cardinality of such a set of edges. 

We now introduce three notions of hypergraph decompositions.

\begin{definition}
 \label{def:GHD}
A {\em generalized hypertree decomposition\/} 
(GHD) of a hypergraph 
$H=(V(H),E(H))$ 
is a tuple 
$\left< T, (B_u)_{u\in N(T)},\right.$ \linebreak $\left.(\lambda_u)_{u\in N(T)} \right>$, such that 
$T = \left< N(T),E(T)\right>$ is a rooted tree and 
the 
following conditions hold:

\begin{enumerate}[label=\emph{(\arabic*})]
\setlength{\itemsep}{0.01ex}
 \item %
 $\forall e \in E(H)$: there exists a node $u \in N(T)$ with $e \subseteq 
B_u$;
 \item %
 $\forall v \in V(H)$: the set $\{u \in N(T) \mid v \in B_u\}$ is 
connected 
in $T$;
 \item %
 $\forall u\in N(T)$: $\lambda_u$ is defined as $\lambda_u\colon E(H) 
\ra 
\{0,1\}$
with 
$B_u \subseteq  B(\lambda_u)$.
\end{enumerate}
\end{definition}

We use the following notational conventions throughout this paper.
To avoid confusion, we will consequently refer to the 
elements in 
$V(H)$ as {\em vertices\/} of the hypergraph and to the 
elements in $N(T)$ as 
the {\em nodes\/} of the decomposition. 
For a node $u$ in $T$, 
we write $T_u$ to denote the subtree of $T$ rooted at $u$.
By slight abuse of notation, we will often write $u' \in T_u$ to denote
that $u'$ is a node in the subtree $T_u$ of $T$.
Finally, we define $\VTu 
:= \bigcup_{u' \in T_u} B_{u'}$.

\begin{definition}
 \label{def:HD}
 A {\em hypertree decomposition\/} (HD) of a hypergraph 
$H=(V(H),E(H))$  is a GHD, which in addition also 
satisfies the following condition:
\begin{enumerate}[label=\emph{(\arabic*})]
 \item[(4)] %
 $\forall u\in N(T)$: $ V(T_u) \cap B(\lambda_u) \subseteq B_u$ 
\end{enumerate}
\end{definition}

\begin{definition}
 \label{def:FHD}
 A 
{\em fractional hypertree decomposition\/} 
(FHD) 
\cite{2014grohemarx}
of a hypergraph 
$H=(V(H),E(H))$ is a tuple 
$\left< T, (B_u)_{u\in N(T)},\right.$ \linebreak $\left. (\gamma_u)_{u\in N(T)} \right>$, where
conditions (1) and (2) of Definition~\ref{def:GHD} plus the following condition 
(3') hold:
\begin{enumerate}[label=\emph{(\arabic*})]
 \item[(3')] $\forall u\in N(T)$: 
$\gamma_u$ is defined as 
$\gamma_u\colon E(H) 
\ra 
[0,1]$
with 
$B_u \subseteq  B(\gamma_u)$.
\end{enumerate}
\end{definition}

The width of a GHD, HD, or FHD is the maximum weight of the functions 
$\lambda_u$ or $\gamma_u$, 
over all nodes $u$ in $T$. 
The generalized hypertree 
width,
hypertree width, and fractional hypertree width of $H$ (denoted $\ghw(H)$, 
$\hw(H)$, $\fhw(H)$) is the minimum width over all GHDs, HDs, and FHDs of $H$, 
respectively.
Condition~(2) is called the ``connectedness condition'', and condition~(4) is 
referred to as ``special condition'' \cite{2002gottlob}. The set $B_u$ is often 
referred to as the 
``bag'' at node $u$. The functions $\lambda_u$ and $\gamma_u$
are referred to as the $\lambda$-{\em label\/} and $\gamma$-{\em label\/} of node $u$.
Strictly speaking, only HDs require that the underlying tree $T$ be rooted. 
We assume that also the tree underlying a 
GHD or an FHD is rooted where the root is {\em arbitrarily\/} 
chosen.

\medskip
\noindent
{\bf Favourable properties of hypergraphs.}
In \cite{pods/FischlGP18}, the following properties of 
hypergraphs were identified to allow for the definition of tractable classes of 
\rec{GHD,\,$k$} and  for an efficient approximation of \rec{FHD,\,$k$}, respectively.

\begin{definition}\label{def:bip}
The {\em intersection width} $\iwidth{\HH}$ of a hypergraph $\HH$ is the 
maximum 
cardinality of any intersection $e_1\cap e_2$ of two edges $e_1 \neq e_2$ 
of $\HH$.
We say that a hypergraph $H$ has
the {\em $i$-bounded intersection property ($i$-BIP)} if 
$\iwidth\HH\leq i$. 
A class $\classC$  of hypergraphs
has the  
{\em bounded intersection property (BIP)} if there exists some 
constant $i$ such that
every hypergraph $H$ in $\classC$
 has the $i$-BIP.
\end{definition}

\begin{definition}\label{def:bmip}
For positive integer $c$, 
the {\em $c$-multi-inter\-section width} $\cmiwidth{c}{\HH}$ of a 
hypergraph 
$\HH$ 
is the maximum cardinality of any intersection $e_1\cap\cdots\cap e_c$  of 
$c$ distinct edges $e_1, \ldots,  e_c$ of $\HH$.  
We say that a hypergraph $H$ has
the 
{\em $i$-bounded $c$-multi-intersection property (\icBMIP)} if 
$\cmiwidth{c}{\HH}\leq i$ holds. 
We say that 
a class $\classC$ of hypergraphs
has the  
{\em bounded multi-intersection property (BMIP)} 
if there exist constants $c$ and $i$ 
such that
every hypergraph $H$ in $\classC$
has the \icBMIP.
\end{definition}

There are two more relevant properties of (classes of) hypergraphs: bounded degree and bounded
Vapnik--Chervo\-nenkis dimension (VC-dimension). 
It is easy to verify \cite{pods/FischlGP18} that bounded degree implies the BMIP,
which in turn implies bounded VC-dimension.

\begin{definition}\label{def:bounded-degree}
The {\em degree} $\deg(\HH)$ of a 
hypergraph 
$\HH$ 
is defined as the 
maximum number $d$ of hyperedges in which a vertex occurs, i.e., 
$d = \max_{v \in V(H)} |\{ e \in E(H) \mid v \in E(H)\}|$.
We say that a class $\classC$ of hypergraphs  has bounded degree, if there exists $d \geq 1$, 
such that every hypergraph $H\in {\classC}$ has degree $\leq d$.
\end{definition}

\begin{definition}[\cite{1971vc}]
\label{def:vc}
Let $\HH=(V(H),E(H))$ be a hypergraph, and $X\subseteq V$ a set of vertices. 
Denote by 
$E(H)|_X =\{X \cap e\, |\, e\in E(H)\}$. $X$ is called {\em shattered} if 
$E(H)|_X=2^X$.
The {\em Vapnik-Chervonenkis dimension (VC dimension)} of $\HH$ is 
the maximum cardinality of a shattered subset of $V$. 
We say that a class $\classC$ of hypergraphs  has bounded VC-dimension, if there exists $v \geq 1$, 
such that every hypergraph $H\in {\classC}$ has VC-dimension $\leq v$.
\end{definition}

The above four properties help to solve or approximate the
 \linebreak 
\rec{GHD,\,$k$} and \rec{FHD,\,$k$} problems as follows:

\begin{theorem}[\cite{FGP2017tractablefhw,pods/FischlGP18}]
\label{theo:tractability}
Let $\classC$ be a class of hypergraphs. 
\begin{itemize}
\setlength{\itemsep}{0.01ex}

\item If $\classC$ has the BMIP, then the \rec{GHD,\,$k$} problem is solvable in polynomial time for arbitrary $k \geq 1$. 
Consequently, this tractability holds if $\classC$ has bounded degree or the BIP (which each imply the BMIP)
\nopVLDB{\cite{pods/FischlGP18}}
\cite{pods/FischlGP18}.

\item If $\classC$ has bounded degree, then the \rec{FHD,\,$k$} problem is solvable in polynomial time for arbitrary 
$k \geq 1$
\cite{FGP2017tractablefhw}. 
\nopVLDB{\cite{FGP2017tractablefhw}.} 

\item If $\classC$ has bounded VC-dimension, then the $\fhw$ can be approximated in polynomial time up to a 
logarithmic factor \cite{pods/FischlGP18}.
\nopVLDB{***************
i.e., there exists a polynomial-time algorithm that, given a hypergraph $H \in 
\classC$ with  
$\fhw(H) \leq k$, finds an FHD of $H$ of width $\calO(k \cdot \log k)$
\cite{pods/FischlGP18}.
***************}
\end{itemize}

\end{theorem}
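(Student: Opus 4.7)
The unifying principle behind all three parts of Theorem~\ref{theo:tractability} is that each of the hypothesised properties (BMIP, bounded degree, bounded VC-dimension) controls the combinatorial complexity of the candidate $\lambda$- or $\gamma$-labels that must be searched through when constructing a decomposition of width $\leq k$. My starting point is the polynomial-time \rec{HD,\,$k$} algorithm from \cite{2002gottlob}, which performs dynamic programming over triples $(S, \lambda, C)$ where $\lambda$ consists of at most $k$ edges and $C$ is a connected component of $V(H) \setminus B(\lambda)$; the crucial fact there is that condition~(4) pins the bag $B_u$ down to a polynomially-bounded candidate set. The plan is to show that each of the hypothesised properties can play the role of condition~(4) for GHDs and FHDs.

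For part~(1), suppose $\classC$ has the BMIP with parameters $c, i$. I would prove a bag-enumeration lemma: for every fixed $\lambda_u$ of size $\leq k$, only constantly many subsets $B_u \subseteq B(\lambda_u)$ need to be considered, because the connectedness condition~(2) forces each vertex of $B_u$ either to lie in a single edge of $\lambda_u$ (contributing constantly many vertex blocks per edge) or to lie in a $c$-fold intersection of edges taken from $\lambda_u$ and from a neighbouring label. The BMIP bounds each such intersection by $i$, so the number of candidate bags over a fixed $\lambda_u$ is at most $2^{\calO(\binom{k}{c} \cdot i)}$, a constant. Plugging this enumeration into a component-based DP reminiscent of the HD algorithm gives polynomial time for \rec{GHD,\,$k$}. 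Bounded degree and the BIP each imply the BMIP, so the two stated corollaries follow.

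For part~(2), bounded degree $d$ implies that, once a candidate bag $B_u$ is fixed, every edge intersecting $B_u$ comes from a polynomially-bounded set of at most $d \cdot |B_u|$ edges. The linear program defining the optimal fractional cover of $B_u$ thus has polynomial size and its optimum admits a polynomial-size rational encoding. Combining the bag enumeration from part~(1) with one LP solve per candidate bag yields a polynomial-time decision procedure for \rec{FHD,\,$k$}.

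For part~(3), the core tool is the classical $\epsilon$-net theorem for set systems of bounded VC-dimension, which via the Br\"onnimann--Goodrich framework implies that any fractional edge cover of weight $w$ can be rounded to an integral edge cover of weight $\calO(w \log w)$. The plan is to first invoke the known polynomial-time cubic approximation for $\fhw$ to obtain an FHD of $H$ of width $\calO(k^3)$, then observe that the sub-hypergraphs relevant at each node inherit bounded VC-dimension, and finally round each fractional $\gamma_u$ locally to an integral cover of weight $\calO(\weight(\gamma_u) \cdot \log k)$; this produces a GHD, hence an FHD, of width $\calO(k \log k)$. The main obstacle I anticipate is part~(1): formalising the bag-enumeration lemma without condition~(4) requires a delicate inductive argument showing that every vertex in an optimal GHD can be anchored via a bounded number of edge intersections, which is precisely the technical heart of \cite{pods/FischlGP18}.
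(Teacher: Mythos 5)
First, note that Theorem~\ref{theo:tractability} is imported from \cite{pods/FischlGP18,FGP2017tractablefhw}; the paper gives no proof of its own, but Section~\ref{sect:ghw-implementation} does sketch the mechanism behind the first bullet: one adds to $H$ the polynomial-size set $f(H,k)$ of subedges drawn from $2^{e\cap(e_1\cup\cdots\cup e_j)}$ with $j\leq k$, and observes that $\ghw(H)\leq k$ iff $\hw(H')\leq k$ for the augmented hypergraph $H'$, so the known polynomial \rec{HD,\,$k$} algorithm can then be run as a black box. Your part~(1) replaces this reduction by a direct dynamic program with a ``bag-enumeration lemma'', which is a genuinely different route; unfortunately the quantitative claim there fails under the BMIP. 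With only $c$-fold intersections bounded by $i$, pairwise intersections $e\cap e'$ can be as large as the edges themselves, so the vertices of an edge of $\lambda_u$ do not fall into constantly many ``blocks'', and the set of candidate bags per fixed $\lambda_u$ is polynomial in $|E(H)|$ (with the exponent depending on $c$, $i$, $k$), not the constant $2^{\calO(\binom{k}{c}\cdot i)}$ you claim; already the choice of the ``neighbouring label'' contributes a factor of order $|E(H)|^{k}$. The qualitative conclusion survives, but the argument as written does not establish it.

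The other two parts have more serious gaps. In part~(2), the observation that the fractional-cover LP has polynomial size is true for \emph{any} hypergraph and does not use bounded degree; the real difficulty, which your sketch inherits unsolved from part~(1), is that in an FHD the bag satisfies only $B_u\subseteq B(\gamma_u)$ for a \emph{fractional} $\gamma_u$ of weight $\leq k$ whose support may comprise up to $|E(H)|$ edges, so candidate bags cannot be enumerated as subsets of $B(\lambda_u)$ over integral $\lambda_u$ with $|\lambda_u|\leq k$. Taming this is precisely why the bounded-degree FHD result of \cite{FGP2017tractablefhw} carries a double-exponential dependence on the degree, as the paper itself remarks in Section~\ref{sect:fractionally-improved}. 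In part~(3), your pipeline (compute an FHD of width $\calO(k^3)$ by Marx's approximation, then round each $\gamma_u$ via an $\epsilon$-net argument) yields width $\calO(k^3\log k)$, i.e.\ still a cubic rather than logarithmic approximation. The intended argument applies the rounding \emph{existentially} to an optimal, non-computed FHD of width $k$ to conclude $\ghw(H)=\calO(k\log k)$ under bounded VC-dimension, and then produces an actual decomposition in polynomial time via $\hw(H)\leq 3\cdot\ghw(H)+1$ and the \rec{HD,\,$k$} algorithm.
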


\section{HyperBench benchmark and tool}
\label{sect:HyperBench}

In this section, we introduce our system, test environment, and HyperBench -- our new hypergraph benchmark and web tool.

\medskip
\noindent
{\bf System and Test Environment.}
In \cite{DBLP:journals/jea/GottlobS08}, an implementation 
(called \detkdecomp)
of the hypertree decomposition algorithm 
from \cite{2002gottlob} was presented. 
We have extended this 
implementation and built our
new library (called \newdetkdecomp) upon it.
This  library includes the original $\hw$-algorithm from \cite{DBLP:journals/jea/GottlobS08}, the tool {\tt hg-stats} to determine properties described in Section~\ref{sect:EmpiricalAnalysis} and the algorithms to be presented in 
Sections~\ref{sect:ghw-implementation} and~\ref{sect:fractionally-improved}. 
The library is written in C++ and comprises around 8,500 lines of code.
The code is %
available in GitHub at \url{http://github.com/TUfischl/newdetkdecomp}.

\nop{****************
To solve the linear programs (LPs) for computing
fractional covers in our algorithms in Section~\ref{sect:fractionally-improved}, we use the \textit{COIN-OR Linear Programming Solver} (CLP) version~1.16 from \url{https://projects.coin-or.org/Clp}. We have chosen CLP over other open-source LP solvers based on very promising empirical results reported in \cite{gearhart2013comparison}.
****************}

All the experiments reported in this paper were performed on a cluster of 10 workstations each running Ubuntu 16.04. Every workstation has the same specification and is equip\-ped with two Intel Xeon E5-2650 (v4) processors each having 12 cores and 256-GB main memory. Since all algorithms are single-threaded, we were allowed to compute several instances in parallel. For all upcoming runs of our algorithms we set a timeout of 3600s.

\begin{figure}[htbp]
\centering 
     \fbox{\includegraphics[width=0.45\textwidth]{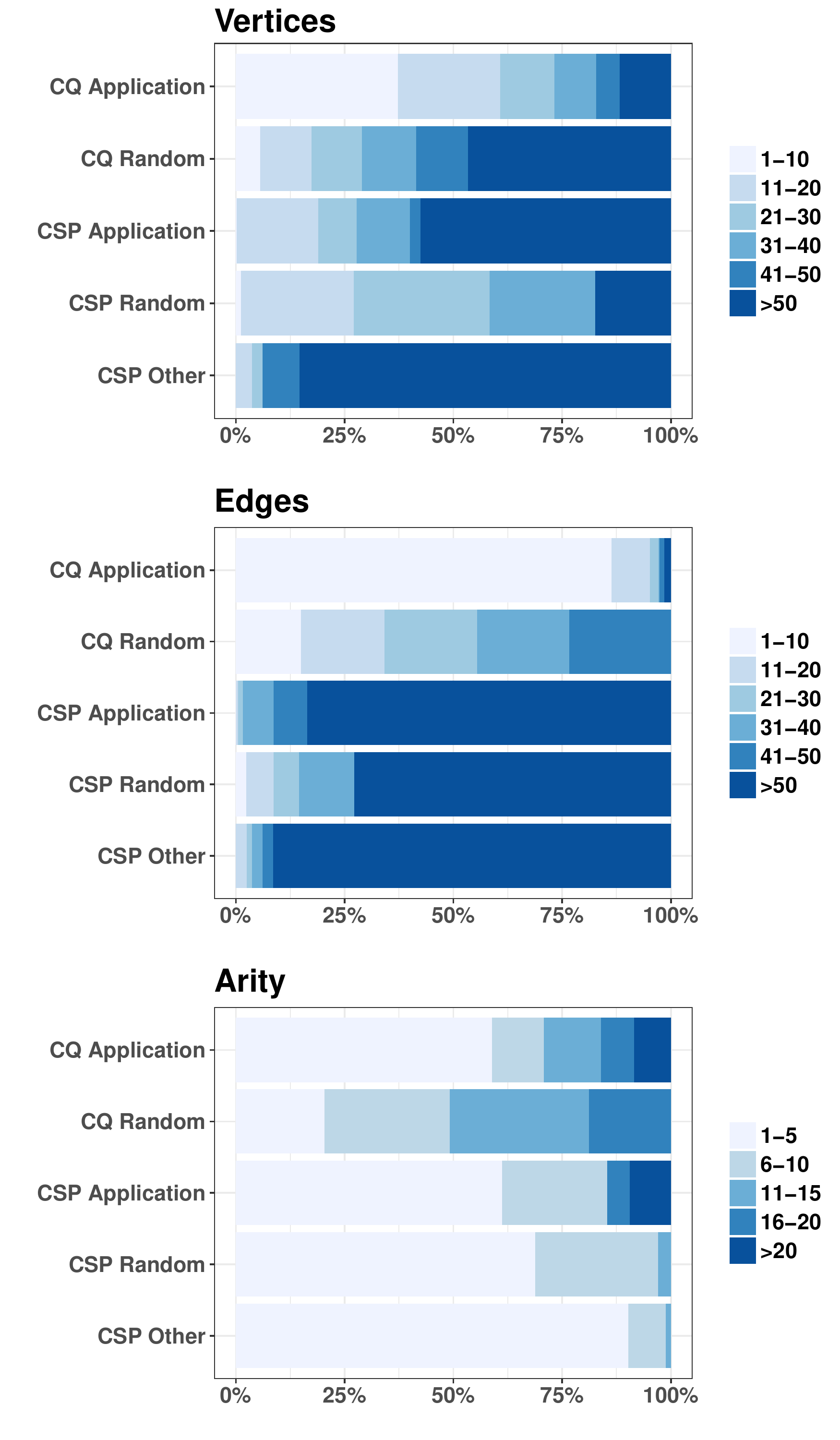}}
    \caption{Hypergraph Sizes} 
\label{fig:hg-sizes}         
\end{figure}    

\begin{table}
  \centering
  \caption{Overview of benchmark instances}
    \begin{tabular}{l|l|r|r}
    & \multicolumn{1}{c|}{\textit{Benchmark}} & \multicolumn{1}{c|}{\textit{No. instances}} & \multicolumn{1}{c}{$\hw\geq 2$} \\
    \hline
    \parbox[t]{2mm}{\multirow{8}{*}{\rotatebox[origin=c]{90}{CQs}}}& \textsc{SPARQL}\cite{DBLP:journals/pvldb/BonifatiMT17} & 70 (out of 26,157,880) & 70 \\
    & \textsc{LUBM}\cite{DBLP:conf/pods/BenediktKMMPST17,guo2005lubm}  & 14    & 2 \\
    & \textsc{iBench}\cite{DBLP:conf/pods/BenediktKMMPST17,arocena2015ibench} & 40    & 0 \\
    & \textsc{Doctors}\cite{DBLP:conf/pods/BenediktKMMPST17,geerts2014mapping} & 14    & 0 \\
    & \textsc{Deep}\cite{DBLP:conf/pods/BenediktKMMPST17}  & 41    & 0 \\
    & \textsc{JOB} (IMDB) \cite{Leis2017} & 33    & 7 \\
    & \textsc{TPC-H}  \cite{BenediktCQs,tpch} & 33    & 1 \\
    & \textsc{SQLShare} \cite{shrjainSQLShare} & 290 (out of 15,170) & 1 \\
    & \textsc{Random} \cite{2001Pottinger} & 500 & 464 \\
    \hline
    \parbox[t]{2mm}{\multirow{3}{*}{\rotatebox[origin=c]{90}{CSPs}}} & \textsc{Application} \cite{xcsp} & 1,090  & 1,090 \\
    & \textsc{Random} \cite{xcsp} & 863   & 863 \\
    & \textsc{Other} \cite{DBLP:journals/jea/GottlobS08,berg2017maxsat} & 82    & 82 \\ 
    \hline
    \multicolumn{1}{r}{} & \multicolumn{1}{r|}{\textit{Total:}} & \textit{3,070} & \textit{2,580} \\ 
    \end{tabular}%
  \label{tab:bench-overview}%
\end{table}%

\medskip
\noindent
{\bf Hypergraph benchmark.}
Our benchmark 
contains 3,070 hypergraphs, which have been converted from CQs and CSPs collected from various sources.
Out of these 3,070 hypergraphs, 2,918 hypergraphs have never been used in a hypertree width analysis before. The hypertree width of 
70 CQs and of 82 CSPs has been analysed in \cite{DBLP:journals/jea/GottlobS08}, \cite{berg2017maxsat}, and/or \cite{DBLP:journals/pvldb/BonifatiMT17}.
An overview of all instances of CQs and CSPs is given in Table~\ref{tab:bench-overview}. 
They have been collected from various publically available benchmarks and repositories
of CQs and CSPs. In the first column, the names of each collection of CQs and CSPs are given together with 
references where they were first published. In the second column we display the number of hypergraphs
extracted from each collection. The $\hw$ of the CQs and CSPs in our benchmark will be discussed
in detail in Section \ref{sect:EmpiricalAnalysis}. To get a first feeling of the $\hw$ of the various sources, 
we mention the number of cyclic hypergraphs (i.e., those with $\hw\geq 2$) in the last column.
When gathering the 
CQs, we proceeded as follows: of the huge benchmark reported in \cite{DBLP:journals/pvldb/BonifatiMT17}, 
we have only included CQs, which were detected as having $\hw \geq 2$ in \cite{DBLP:journals/pvldb/BonifatiMT17}. Of the big repository reported in 
\cite{shrjainSQLShare}, we have included those CQs, which are not trivially acyclic (i.e., they have at least 3 atoms). Of all the small collections of queries, we have included all. 

Below, we describe the different benchmarks in detail:

\smallskip
$\bullet$ \
{\it CQs}: Our benchmark contains 535 CQs from four main sources \cite{BenediktCQs,DBLP:conf/pods/BenediktKMMPST17,DBLP:journals/pvldb/BonifatiMT17,shrjainSQLShare} and a set of 500 randomly generated queries using the query generator of \cite{2001Pottinger}. 
In the sequel, we shall refer to the former queries as {\it CQ Application\/}, and to the latter
as {\it CQ Random\/}.
The CQs analysed in \cite{DBLP:journals/pvldb/BonifatiMT17} constitute by far the biggest repository of CQs -- namely 26,157,880 CQs stemming from 
SPARQL queries. The queries come from real-users of SPARQL endpoints and their hypertree width was already determined in \cite{DBLP:journals/pvldb/BonifatiMT17}. Almost all of these CQs were shown to be acyclic. Our analysis comprises 70 CQs from \cite{DBLP:journals/pvldb/BonifatiMT17}, 
which (apart from few exceptions) are essentially the ones in \cite{DBLP:journals/pvldb/BonifatiMT17} with $\hw \geq 2$. 
In particular, we have analysed all 8 CQs with highest $\hw$ among the CQs analysed in \cite{DBLP:journals/pvldb/BonifatiMT17} 
(namely, $\hw = 3$).

The \textsc{LUBM} \cite{guo2005lubm}, \textsc{iBench} \cite{arocena2015ibench}, \textsc{Doctors} \cite{geerts2014mapping}, and \textsc{Deep} scenarios have been recently used to evaluate the performance of chase-based systems \cite{DBLP:conf/pods/BenediktKMMPST17}. Their queries were especially tailored towards the evaluation of query answering tasks of such systems. 
Note that the \textsc{LUBM} benchmark \cite{guo2005lubm} is a widely used standard benchmark for the evaluation of Semantic Web repositories. Its queries are designed to measure the performance of those repositories over large datasets. Strictly speaking, the \textsc{iBench} is a tool for generating schemas, constraints, and mappings for data integration tasks. However, in \cite{DBLP:conf/pods/BenediktKMMPST17}, 40 queries were created for tests with the 
\textsc{iBench}. We therefore refer to these queries as \textsc{iBench}-CQs here. In summary, we have incorporated all queries that were either contained in the original benchmarks or created/adapted for the tests in \cite{DBLP:conf/pods/BenediktKMMPST17}.

The goal of the Join Order Benchmark (JOB) \cite{Leis2017} was to evaluate the impact of a good join order on the performance of query evaluation in standard RDBMS. Those queries were formulated over the real-world dataset Internet Movie Database (IMDB). All of the queries have between 3 and 16 joins. Clearly, as the goal was to measure the impact of a good join order, those 33 queries are of higher complexity, hence 7 out of the 33 queries have $\hw \geq 2$.

The 33 TPC-H queries in our benchmark are taken from the GitHub repository 
originally provided by Michael Benedikt and Efthymia Tsamoura \cite{BenediktCQs} for the work on [12].
Out of the 33 CQs based on the TPC-H benchmark \cite{tpch}, 13 queries were handcrafted and 20 randomly generated. The TPC-H benchmark has been widely used to assess multiple aspects of the capabilities of RDBMS to process queries. They reflect common workloads in decision support systems and were chosen to have broad industry-wide relevance.

From SQLShare \cite{shrjainSQLShare}, a multi-year SQL-as-a-service experiment with 
a large set of real-world queries, we extracted 15,170 queries by considering 
all CQs (in particular, no nested SELECTs).
After elimi\-nating trivial queries (i.e., queries with $\leq 2$ atoms, whose
acyclicity is immediate) and duplicates, we ended up with 290 queries.

The random queries were generated with a tool that stems from the work on query answering using views in \cite{2001Pottinger}. 
The query generator allows 3 options: chain/star/random queries. Since the former two types are trivially acyclic, 
we only used the third option. Here it is possible to supply several parameters for the size of the generated queries. In terms of the 
resulting hypergraphs, one can thus fix the number of vertices, number of edges and arity. We have generated 500 CQs with 5 -- 100 vertices, 
3 -- 50 edges and arities from 3 to 20.
These values correspond to the values observed for the {\it CQ Application\/} hypergraphs.
However, even though these size values have been chosen similarly, 
the structural properties of the hypergraphs in two groups {\it CQ Application\/} and {\it CQ Random\/} 
differ significantly,
as will become  clear from our analysis in Section \ref{sect:EmpiricalAnalysis}.

\smallskip
$\bullet$ \
{\it CSPs}: In total, our benchmark currently contains 2,035 hypergraphs from CSP instances, out of which 
1,953 instances were obtained 
from \url{xcsp.org} (see also \cite{xcsp}).
We have selected 
all CSP instances from \url{xcsp.org} 
with less than 100 constraints 
such that all constraints are extensional. 
These instances are divided into CSPs
from concrete applications, called \textit{CSP Application} in the sequel (1,090 instances), 
and randomly generated CSPs, called \textit{CSP Random} below (863 instances). 
In addition, we have included 82 CSP instances from previous hypertree width analyses
provided at \url{https://www.dbai.tuwien.ac.at/proj/hypertree/}; all of these stem from industrial applications and/or further CSP benchmarks. We refer to these instances as {\em other CSPs}.
\nopVLDB{*******************
Out of the 82 instances 61 were already used as a benchmark for the 
HD-computation in \cite{DBLP:journals/jea/GottlobS08}. These 61 instances are split into
15 CSP instances from DaimlerChrysler, 12 CSP instances from Grid2D, and 24 CSP instances from the ISCAS'89 benchmark on circuits. The 82 instances also include 35 instances used in the MaxSAT Evaluation 2017 \cite{berg2017maxsat}. There a MaxSAT encoding was created to determine the $\ghw$ of each of the hypergraphs. During the competition 7 out of the 35 instances were solved. 
We refer to the resulting 82 hypergraphs as ``other CSPs''. In our extended analysis, we keep them separated from the 
instances in the ``CSP Application'' and ``CSP Random'' category mentioned above
because of the different provenance and for the sake of comparability with previous evaluations of hypergraph decomposition algorithms. 
*******************}

\smallskip

Our HyperBench benchmark consists of these instances converted 
to hypergraphs.
In Figure~\ref{fig:hg-sizes}, we show the number of vertices, the number of edges and the arity (i.e., the maximum size of the edges) as 
three important metrics of the size of each hypergraph. 
The smallest are those coming from {\it CQ Application\/} (at most 10 edges), while the 
hypergraphs coming from CSPs can be significantly larger (up to 2993 edges). %
Although some hypergraphs are very big, more than 50\% of all hypergraphs have maximum arity less than 5. 
In Figure~\ref{fig:hg-sizes} we can easily compare the different types of hypergraphs, e.g.\ hypergraphs of arity greater than 20  only exist in the {\em CSP Application} class;
the {\em other CSPs\/} class contains the highest portion of 
hypergraphs with a big number of vertices and edges, etc.
The hypergraphs and the results of their analysis can be accessed through our web tool, available at \url{http://hyperbench.dbai.tuwien.ac.at}.

\nop{******************************************
\begin{itemize}
\item concrete values of the hypergraph properties 
which are crucial for 
the tractable computation of ghw and/or approximation of fhw
(see Section \ref{sect:EmpiricalAnalysis});
\item exact values or at least lower and/or upper bounds on the hw 
of the hypergraphs in our benchmark 
(see also Section \ref{sect:EmpiricalAnalysis})
\item potential improvements of the width via GHDs or 
via fractionally improved HDs (see Sections 
\ref{sect:ghw-implementation} and \ref{sect:fractionally-improved}).
\end{itemize} 
******************************************}

\section{First Empirical Analysis}
\label{sect:EmpiricalAnalysis}

In this section, we present first empirical results obtained with the 
HyperBench benchmark. On the one hand, 
we want to get an overview of the hypertree width of 
the various types of hypergraphs in our benchmark 
(cf.\ Goal 2 in Section~\ref{sect:introduction}).
On the other hand, 
we want to find out how realistic 
the restriction to low values for certain hypergraph invariants is
(cf.\ Goal 3 stated in Section~\ref{sect:introduction}).

\medskip

\noindent
{\bf Hypergraph Properties.}
In \cite{FGP2017tractablefhw,pods/FischlGP18}, 
several invariants of 
hypergraphs were used 
to make the \rec{GHD,\,$k$}  and \rec{FHD,\,$k$} 
problems  tractable or, at least, 
easier to approximate. We thus investigate the following properties (cf.\ Definitions \ref{def:bip}~--~\ref{def:vc}):
\begin{itemize}
 \item {\it Deg}: the degree of the underlying hypergraph
 \item {\it BIP}: the intersection width
 \item $c$-{\it BMIP}: the $c$-multi-intersection width for $c \in \{3,4\}$
 \item {\it VC-dim}: the VC-dimension 
\end{itemize}

\begin{table}[t]
  \centering 
         \caption{Hypergraph properties of all benchmark instances}
  \label{tab:hg-props}%
    \begin{tabular}{rrrrrr}
       \multicolumn{6}{c}{\it CQ Application} \\
    \toprule
        \multicolumn{1}{c}{$i$}  & \multicolumn{1}{c}{Deg} & 
\multicolumn{1}{c}{BIP} & \multicolumn{1}{c}{3-BMIP} & 
\multicolumn{1}{c}{4-BMIP} & \multicolumn{1}{c}{VC-dim} \\
     \midrule
    0     & 0     & 0     & 118   & 173   & 10 \\
    1     & 2     & 421   & 348   & 302   & 393 \\
    2     & 176   & 85    & 59    & 50    & 132 \\
    3     & 137   & 7     & 5     & 5     & 0 \\
    4     & 87    & 5     & 5     & 5     & 0 \\
    5     & 35    & 17    & 0     & 0     & 0 \\
    6     & 98    & 0     & 0     & 0     & 0 \\ \bottomrule
    \end{tabular}%
    \;
    \begin{tabular}{rrrrrr}
       \multicolumn{6}{c}{\it CQ Random} \\
    \toprule
        \multicolumn{1}{c}{$i$}  & \multicolumn{1}{c}{Deg} & 
\multicolumn{1}{c}{BIP} & \multicolumn{1}{c}{3-BMIP} & 
\multicolumn{1}{c}{4-BMIP} & \multicolumn{1}{c}{VC-dim} \\
     \midrule
    0     & 0     & 1     & 16    & 49    & 0 \\
    1     & 1     & 17    & 77    & 125   & 20 \\
    2     & 15    & 53    & 90    & 120   & 133 \\
    3     & 38    & 62    & 103   & 74    & 240 \\
    4     & 31    & 63    & 62    & 42    & 106 \\
    5     & 33    & 71    & 47    & 28    & 1 \\
    6     & 382   & 233   & 105   & 62    & 0 \\ \bottomrule
    \end{tabular}%
    \;
    \begin{tabular}{rrrrrr}
    \multicolumn{6}{c}{\it CSP Application \& Other} \\
    \toprule
        \multicolumn{1}{c}{$i$}  & \multicolumn{1}{c}{Deg} & 
\multicolumn{1}{c}{BIP} & \multicolumn{1}{c}{3-BMIP} & 
\multicolumn{1}{c}{4-BMIP} & \multicolumn{1}{c}{VC-dim} \\
     \midrule
    0     & 0     & 0     & 597   & 603   & 0 \\
    1     & 0     & 1037  & 495   & 525   & 0 \\
    2     & 597   & 95    & 57    & 23     & 1115 \\
    3     & 6     & 29     & 21     & 21     & 52 \\
    4     & 20     & 10     & 2     & 0     & 0 \\
    5     & 6     & 0     & 0     & 0     & 0 \\
    $>$5   & 543   & 1     & 0     & 0     & 0 \\
    \bottomrule
    \end{tabular}
    \;
    \begin{tabular}{rrrrrr}
    \multicolumn{6}{c}{\it CSP Random} \\
    \toprule
          \multicolumn{1}{c}{$i$}  & \multicolumn{1}{c}{Deg} & 
\multicolumn{1}{c}{BIP} & \multicolumn{1}{c}{3-BMIP} & 
\multicolumn{1}{c}{4-BMIP} & \multicolumn{1}{c}{VC-dim} \\
    \midrule
    0     & 0     & 0     & 0     & 0     & 0 \\
    1     & 0     & 200   & 200   & 238   & 0 \\
    2     & 0     & 224   & 312   & 407   & 220 \\
    3     & 0     & 76    & 147   & 95    & 515 \\
    4     & 12    & 181   & 161   & 97    & 57 \\
    5     & 8     & 99    & 14    & 1     & 71 \\
    $>$5    & 843   & 83    & 29    & 25    & 0 \\
    \bottomrule
    \end{tabular}%
\end{table}%

The results obtained from computing 
{\it Deg}, {\it BIP}, $3$-{\it BMIP},  $4$-{\it BMIP}, and {\it VC-dim}
for the hypergraphs in the HyperBench benchmark 
are shown in Table~\ref{tab:hg-props}.

Table~\ref{tab:hg-props} has to be read as follows: In the first column, we distinguish different values of the various hypergraph metrics. 
In the columns labelled ``Deg``, ``BIP``, etc., we indicate for how many instances each metric has a particular value. 
For instance, by the last row in the second column, 
only 98 non-random CQs
have degree~$>5$. Actually, for most CQs, the degree is less than 10.
Moreover, for the BMIP, already with intersections of 3 edges, we get $\cmiwidth{3}{\HH} \leq 2$ 
for almost all non-random CQs. Also the VC-dimension is at most~2.

For CSPs, all properties may have higher values. However, we note a significant 
difference between randomly generated CSPs and the rest: 
For hypergraphs in the groups {\em CSP Application\/} and {\em CSP Other\/}, 
543 (46\%) hypergraphs have a high degree ($>$5), but nearly all instances have BIP or BMIP of less than 3. And most instances have a VC-dimension of at most 2. 
In contrast, nearly all random instances have a 
significantly higher degree (843 out of 863 instances with a degree $>$5). 
Nevertheless, many instances have small BIP and 
BMIP. For nearly all hypergraphs (838 out of 863) 
we have $\cmiwidth{4}{\HH} \leq 4$. 
For 5 instances the computation of the VC-dimension
timed out. For all others, the VC-dimension is $\leq 5$ for random CSPs.
Clearly, as seen in Table \ref{tab:hg-props}, the random CQs resemble the random CSPs a lot more than the CQ and CSP Application instances. For example, random CQs have similar to random CSPs high degree (382 (76\%) with degree $>5$), higher BIP and BMIP. Nevertheless, similar to random CSPs, the values for BIP and BMIP are still small for many random CQ instances.

To conclude, 
for 
the proposed properties, in particular BIP/BMIP and VC-dimension,
most hypergraphs 
in 
our
benchmark
(even for non-random CQs and CSPs) 
indeed  
have low values. 

\medskip
\begin{figure}
\centering 
     \fbox{\includegraphics[width=0.45\textwidth]{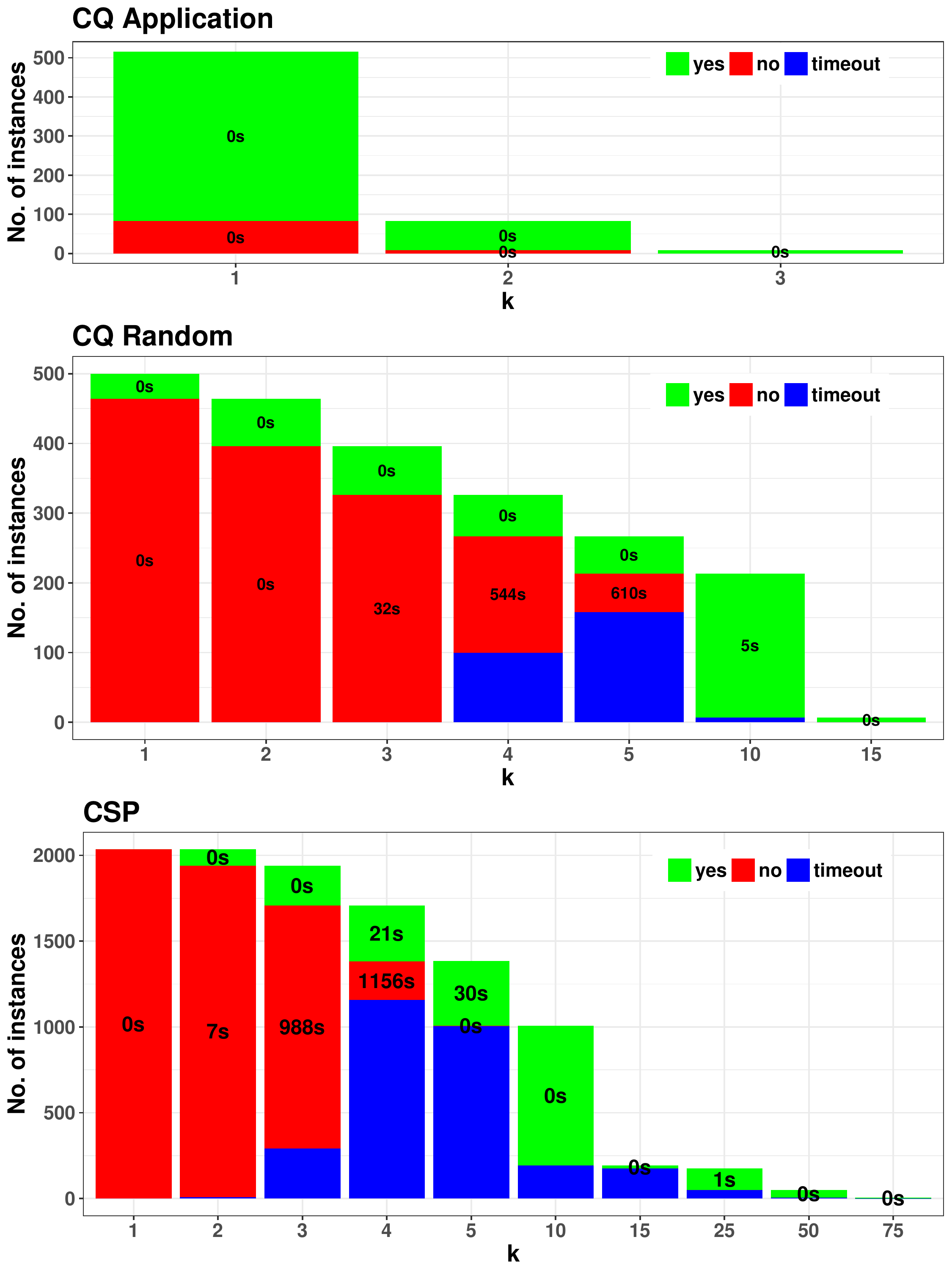}}
    \caption{HW analysis (labels are average runtimes in s)} 
\label{fig:hw}      
\end{figure}

\noindent
{\bf Hypertree Width.} We have systematically applied the 
$\hw$-compu\-tation from \cite{DBLP:journals/jea/GottlobS08} 
to all hypergraphs 
in the benchmark. The results are summarized in Figure~\ref{fig:hw}. 
In our experiments, we proceeded as follows. 
We distinguish between 
{\it CQ Application\/}, {\it CQ Random\/},  
and all three groups of CSPs taken together.
For every hypergraph $H$, we first tried to solve the 
\rec{HD,\,$k$} problem for $k = 1$. In case of {\em CQ Application\/}, we thus got 454 yes-answers 
and 81 no-answers. The number in each bar indicates the average runtime to 
find these yes- and no-instances, respectively. 
Here, the average runtime was ``0'' (i.e., less than 1 second) 
in both cases. 
For {\it CQ Random\/} we got 36 yes- and 464 no-instances with an average runtime below 1 second.
For all %
CSP-instances, we only got no-answers. %

In the second round, we tried to solve the \rec{HD,\,$k$} problem for 
$k = 2$ for all hypergraphs that yielded a no-answer for $k=1$. 
Now the picture is a bit more diverse: 73 of the remaining 81 CQs from 
{\it CQ Application\/} yielded a 
yes-answer in less than 1 second. For the hypergraphs stemming from {\it CQ Random} (resp.\ CSPs), only 
68 (resp.\ 95) instances yielded a yes-answer (in less than 1 second on average), while 396 
(resp.\ 1932) instances yielded a no-answer in less than 7 seconds on average and 8 CSP instances 
led to a timeout (i.e., the program did not terminate within 3,600 seconds).

This procedure is iterated by incrementing $k$ and running the $\hw$-computation for all instances, that either yielded a no-answer or a timeout in the previous round. For instance, for queries from {\it CQ Application\/}, 
one further round is needed after the second round. 
In other words, we confirm the observation of low $\hw$, which was 
already made for CQs of arity $\leq 3$ in \cite{DBLP:journals/pvldb/BonifatiMT17,DBLP:conf/sigmod/PicalausaV11}.
For the hypergraphs stemming from {\em CQ Random} (resp.\ CSPs), 396 (resp.\ 1940 )instances 
are left in the third round, of which 70 (resp.\ 232) yield a yes-answer in less than 1 second on average, 326 (resp.\ 1415) instances yield a no-answer in 32 (resp.\ 988) seconds on average and 
no (resp.\ 293) instances yield a timeout. 
Note that, as we increase $k$, 
the average runtime  and the percentage of timeouts first increase up to a certain point and then they decrease.
This is due to the fact that,
as we increase $k$, the number of combinations of edges to be considered in 
each $\lambda$-label (i.e., the function $\lambda_u$ at each node $u$ 
of the decomposition) increases. In principle, we have to test 
$\mathcal{O} (n^{k})$ combinations, where $n$ is the number of edges.
However, if $k$ increases beyond a certain point, then it gets easier to ``guess'' a $\lambda$-label since an increasing portion of the 
$\mathcal{O} (n^{k})$ possible combinations leads to a solution (i.e., 
an HD of desired width).

To answer the question in {\it Goal 2}, it is indeed the case that for a big number of instances, 
the hypertree width is small enough to allow for efficient evaluation of CQs or CSPs: all instances of non-random CQs have $\hw \leq 3$ no matter 
whether their arity is bounded by 3 (as in case of SPARQL queries) or not; and a large portion (at least 1027, i.e., ca.\ 50\%) 
of all 2035 CSP instances have $\hw \leq 5$. 
In total, including random CQs, 1,849 (60\%) out of 3,070 instances have 
$\hw \leq 5$, for which
we could determine the exact hypertree width for 1,453 instances; 
the others may even have lower $\hw$.

\medskip

\noindent
{\bf Correlation Analysis.} Finally, we have analysed the pairwise correlation between 
all properties. 
Of course, the different intersection widths (BIP, 3-BMIP, 4-BMIP)
are highly correlated.
Other than that, we only observe quite a high correlation of the arity with the number of vertices and the hypertree width and 
of the number of vertices with the arity and the hypertree width. 
Clearly, the correlation between arity and hypertree width 
is mainly due to the CSP instances and the random CQs since, for non-random CQs, 
the $\hw$ never increases beyond~$3$,  independently of the arity.

A graphical presentation of all pairwise correlations is given in Figure~\ref{fig:corr}. Here, large, dark circles indicate a high correlation, 
while small, light circles stand for low correlation. 
Blue circles indicate a positive  correlation while red circles stand for a negative correlation.  
In  \cite{pods/FischlGP18}, we have argued that Deg, BIP, 3-BMIP, 4-BMIP and VC-dim are non-trivial restrictions to achieve tractability. It is interesting to note that, 
according to the correlations shown in Figure~\ref{fig:corr}, 
these properties have almost no impact on the hypertree width of our hypergraphs.
This underlines the usefulness of these restrictions in the sense that (a) they make the GHD computation and 
FHD approximation easier \cite{pods/FischlGP18} but (b) low values of degree, (multi-)intersection-width, or VC-dimension do not pre-determine low values of the widths.

\begin{figure}[htbp]
\centering 
     \fbox{\includegraphics[width=0.45\textwidth]{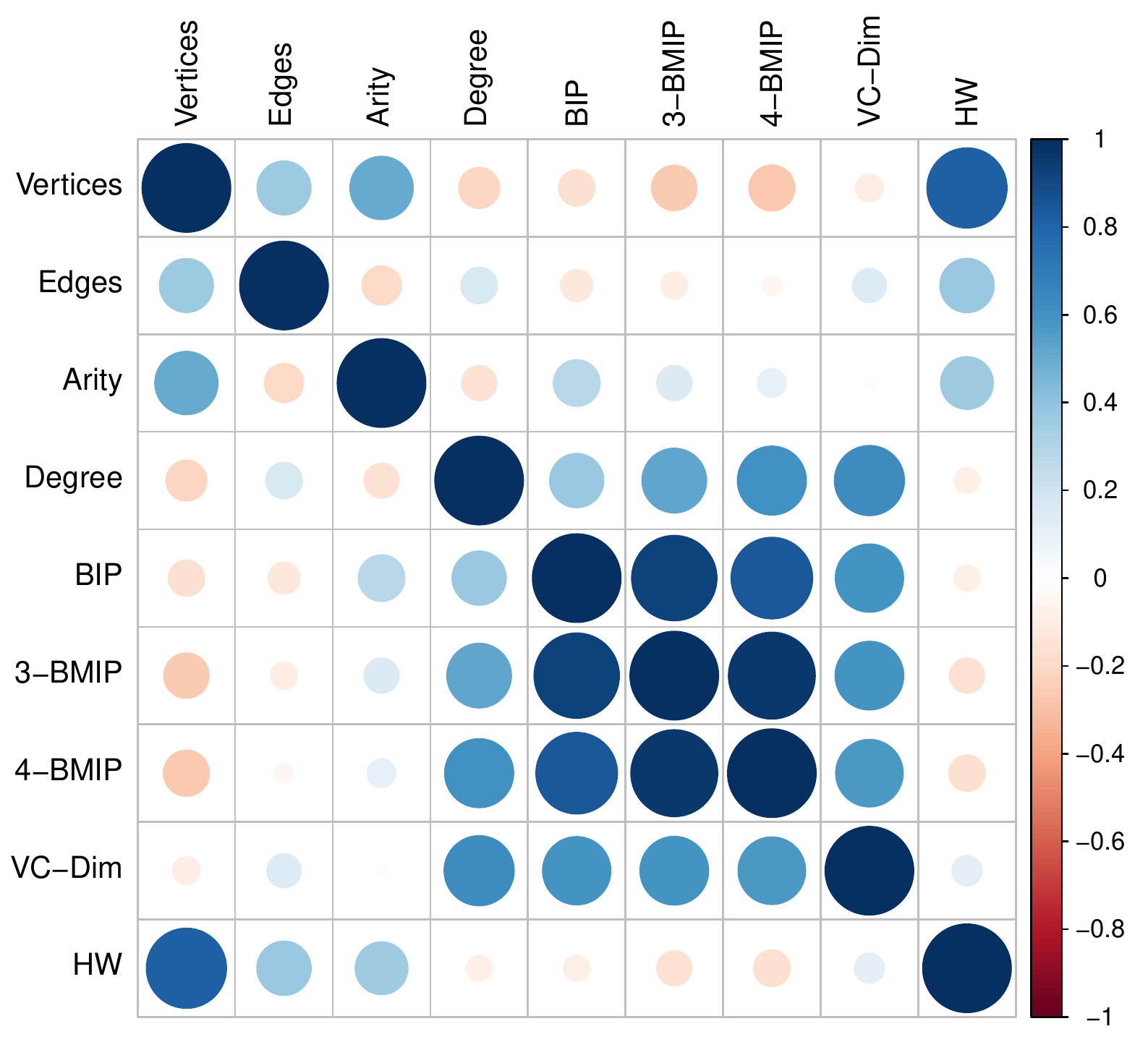}}    
    \caption{Correlation analysis.} 
\label{fig:corr}     
\end{figure}  

\section{GHW Computation}
\label{sect:ghw-implementation}

In this section, we report on new algorithms and implementations to solve the \rec{GHD,\,$k$} problem and on new empirical results.

\medskip

\noindent
{\bf Background.}
In \cite{pods/FischlGP18}, it is shown that the \rec{GHD,\,$k$} problem becomes tractable for fixed $k \geq 1$, if
we restrict ourselves to a class of hypergraphs enjoying the BIP. As our first empirical analysis with the HyperBench has shown 
(see Section \ref{sect:EmpiricalAnalysis}), 
it is indeed realistic to assume that the 
intersection width of a given hypergraph is small. We have therefore extended the 
$\hw$-computation from \cite{DBLP:journals/jea/GottlobS08}  by an implementation of the \rec{GHD,\,$k$} algorithm 
from \cite{pods/FischlGP18},
which will be referred to as the ``$\ghw$-algorithm'' in the sequel. 
This algorithm
is parameterized, so to speak, by two integers:  $k$ (the desired width of a GHD) and  
$i$ (the intersection width of $H$). 

The key idea of the $\ghw$-algorithm is to add a polynomial-time computable set $f(H,k)$ of 
subedges of edges in $E(H)$ to the hypergraph $H$, such that $\ghw(\HH)=k$ 
iff 
$\hw(\HH')=k$ with 
$H = (V(H), E(H))$ and
$H' = (V(H), E(H) \cup f(\HH,k))$. 
Tractability of \linebreak \rec{GHD,\,$k$} follows immediately from the 
tractability of the \rec{HD,\,$k$} problem. 
The set $f(H,k)$ is defined as 
$$f(\HH,k)= \bigcup_{e\in E(\HH)}   
\Big({\bigcup_{e_1,\ldots,e_j\in (E(\HH) \setminus \{e\}),\, j\leq k }} 
2^{(e\cap (e_1\cup\cdots\cup e_j))}\Big),$$
i.e., $f(\HH,k)$ contains all subsets of intersections of edges $e \in E(H)$ with unions of $\leq k$ 
edges of $H$ 
different from $e$. By the BIP, the intersection $e\cap (e_1\cup\cdots\cup e_j)$ has at most $i\cdot k$ 
elements. Hence, for fixed constants $i$ and $k$,  $|f(H,k)|$ is polynomially bounded.

\medskip

\noindent
{\bf ``Global'' implementation.}
In a straightforward implementation of this algorithm, we compute $f(\HH,k)$ and from this 
$H'$ and call the $\hw$-compu\-tation from \cite{DBLP:journals/jea/GottlobS08}  for the 
\rec{HD,\,$k$} problem as a ``black box''. 
A coarse-grained overview of the results is given in Table~\ref{tab:ghw_instances_solved} in the column labelled as `GlobalBIP''.
We call this implementation of the $\ghw$ algorithm of \cite{pods/FischlGP18} ``global'' to indicate that the set $f(H,k)$ is computed ``globally'', once and for all, for the {\em entire\/} hypergraph.
We have run the program on each hypergraph from the HyperBench 
up to hypertree width $6$, 
trying to get a smaller $\ghw$ than $\hw$. 
We have thus run the $\ghw$-algorithm with the following parameters: for all hypergraphs $H$ 
with $\hw(H) = k$ (or $\hw \leq k$ and, due to timeouts, we do not know if $\hw \leq k-1$ holds), 
where $k \in \{3,4,5,6\}$, 
try to solve the \rec{GHD,\,$k-1$} problem. 
In other words, we just tried to improve the width by 1. 
Clearly, for $\hw(H) \in \{1,2\}$, no improvement is possible since, in this case,  
$\hw(H) = \ghw(H)$ holds.

In Table~\ref{tab:ghw_instances_solved}, we report on the number of 
``successful'' attempts to solve the \rec{GHD,\,$k-1$} 
problem for hypergraphs with $\hw =k$. Here ``successful'' means that 
the program terminated within 1 hour. 
For instance, for the 310 hypergraphs with $\hw = 3$ 
in the HyperBench, 
the ``global'' computation terminated in 128 cases (i.e., 41\%) when trying to solve \rec{GHD,\,$2$}. The average runtime of these ``successful'' runs was 537 seconds.
For the 386 hypergraphs with $\hw = 4$, 
the ``global'' computation terminated in 137 cases (i.e., 35\%) 
with average runtime 2809 
when trying to solve the \rec{GHD,\,$3$} problem.
For the 886 hypergraphs with $\hw \in \{5,6\}$, the ``global'' computation only 
terminated in 13 cases (i.e., 1.4\%). 
Overall, it turns out that the set $f(\HH,k)$ may be very big (even though it is 
polynomial if $k$ and $i$ are constants). Hence, $H'$ can become  considerably bigger than $H$. 
This explains the frequent timeouts in the GlobalBIP column in Table~\ref{tab:ghw_instances_solved}.

\begin{table}
  \centering 
    \caption{%
    Comparison of GHW algorithms w. avg. runtime (s)}
  \label{tab:ghw_instances_solved}%
        \setlength{\tabcolsep}{2.4pt}
     \begin{tabular}{lrrrrrrr} 
       \toprule
        \multicolumn{1}{c}{\small $\hw \ra$} &  & \multicolumn{2}{c}{\globalbip} & 
\multicolumn{2}{c}{\localbip} & \multicolumn{2}{c}{\balsep} \\

\multicolumn{1}{c}{\small $\ghw$} & \multicolumn{1}{c}{\small\it total} & \multicolumn{1}{c}{\small\it yes} & 
\multicolumn{1}{c}{\small\it no} & \multicolumn{1}{c}{\small\it yes} & 
\multicolumn{1}{c}{\small\it no} &\multicolumn{1}{c}{\small\it yes} & 
\multicolumn{1}{c}{\small\it no}  \\
     \midrule
     $3 \ra 2$ & 310 & \multicolumn{1}{c}{-} & 128 (537)  & \multicolumn{1}{c}{-} & 195 (162)  & \multicolumn{1}{c}{-} & 307 (12) \\
     $4 \ra 3$ & 386  & \multicolumn{1}{c}{-} & 137 (2809)  & \multicolumn{1}{c}{-} & 54  (2606) & \multicolumn{1}{c}{-}  & 249  (54) \\
     $5 \ra 4$ &  427 & \multicolumn{1}{c}{-} & \multicolumn{1}{c}{-}  & \multicolumn{1}{c}{-} & \multicolumn{1}{c}{-} & \multicolumn{1}{c}{-} & 148 (13)  \\ 
    $6 \ra 5$ & 459 & 13 (162)  & \multicolumn{1}{c}{-} & 13  (60)  & \multicolumn{1}{c}{-} & \multicolumn{1}{c}{-} & 180 (288)  \\ 
     \bottomrule
    \end{tabular}%
\end{table}%

\medskip

\noindent
{\bf ``Local'' implementation.}
Looking for ways to improve the $\ghw$-algorithm, we closely inspect the role played by the set $f(\HH,k)$
in the tractability proof in \cite{pods/FischlGP18}.
The definition of this set is motivated by
the problem that, in the top down construction of a GHD, we may want to choose at some node $u$
the bag $B_u$ such that $x \not\in B_u$ for some variable $x \in B(\lambda_u) \cap V(T_u)$. 
This violates condition (4) of Definition~\ref{def:HD} (the ``special condition'') and is therefore forbidden in an HD. 
In particular, there exists an edge $e$ with $x \in e$ and $\lambda_u(e) = 1$. 
The crux of the $\ghw$-algorithm in \cite{pods/FischlGP18} is that for every such ``missing'' variable $x$, 
the set $f(\HH,k)$ contains a subedge $e' \subseteq e$ with $x \not \in e'$. Hence, replacing $e$ by $e'$ in $\lambda_u$ 
(i.e., setting $\lambda_u(e) = 0$, $\lambda_u(e') = 1$ and leaving $\lambda_u$ unchanged elsewhere) eliminates the special condition 
violation. By the connectedness condition, it suffices to consider the intersections of $e$
with unions of edges that may possibly occur in bags of $T_u$ rather than with arbitrary edges in $E(H)$.
In other words, for each node $u$ in the decomposition, we may restrict $f(H,k)$ to an appropriate subset 
$f_u(H,k) \subseteq f(H,k)$.
\nopVLDB{****************************
The idea behind the 
construction of $f(\HH,k)$ is to assume the existence of a GHD ${\cal D}$ of desired width $\leq k$ and to transform ${\cal D}$ into an HD. Of course, if ${\cal D}$ is an HD, then we are done. So suppose that in some node $u$ of ${\cal D}$, condition 4 (= the ``special condition'') of Definition \ref{def:HD} is violated.
This means that $\lambda_u$ contains some edge $e$, such that some vertex $x \in e$ occurs in 
$V(T_u)$ for the
subtree $T_u$ rooted at $u$ but not in $B_u$. Now $f(\HH,k)$ is constructed in such a way that it contains
an appropriate subedge $e' \subseteq e$, such that (1) $x \not\in e'$ and (2) replacing $e$ in $\lambda_u$ 
by $e'$ 
still satisfies condition 3 of Definition \ref{def:GHD}. We can be sure that all of $e$ is contained 
in $B_{u'}$ for some descendant node $u'$ of $u$. Moreover, by the connectedness condition, all vertices in 
$e \cap B_u$ must occur in $B_v$ for every node $v$ along the whole path from $u$ to $u'$. 
It is thus shown in \cite{pods/FischlGP18} that the only subedges 
$e' \subseteq e$ relevant for healing a special condition violation at node $u$ 
are those obtained from intersecting $e$ with edges in some bag $\lambda_v$ on the path from $u$ to $u'$. 
At every node $u$, we can therefore replace the ``global'' set $f(H,k)$ of subedges by a ``local''
set $f_u(H,k)$, where we intersect each edge $e$ in $\lambda_u$ only with edges that have to be 
covered yet in the subtree below $u$ (i.e., the so-called ``component'' $C_u$ of $u$ \cite{2002gottlob}) 
or which at least have a non-empty intersection with some edge in $C_u$. 
****************************}

The results obtained with this enhanced version of the $\ghw$-computation are shown in 
Table~\ref{tab:ghw_instances_solved} in the column labelled ``LocalBIP''. 
We call this implementation of $\ghw$-computation ``local'' because the set 
$f_u(H,k)$ of subedges of $H$ to be added to the hypergraph
is computed separately for each node $u$ of the decomposition.
Recall that in this table, the ``successful'' calls of the program are recorded.
Interestingly, for the 
hypergraphs with $\hw = 3$, the  ``local'' computation performs significantly better (namely 63\% solved
with average runtime 162 seconds rather than 41\% with average runtime 537 seconds). 
In contrast, for the hypergraphs with $\hw = 4$, the ``global'' computation is significantly more successful.
For $\hw \in \{5,6\}$, the ``global'' and ``local'' 
computations are
equally bad.
A possible explanation for the reverse behaviour of ``global'' and ``local'' computation in case of 
$\hw = 3$ as opposed to $\hw = 4$ is that the restriction of the ``global'' set 
$f(H,k)$ of subedges to the ``local'' set $f_u(H,k)$ at each node $u$ seems to be
quite effective for the hypergraphs with $\hw = 3$. In contrast, 
the additional cost of having to compute $f_u(H,k)$ at each node $u$ becomes counter-productive, when the set of subedges thus eliminated is 
not significant. It is interesting to note that the sets of solved instances 
of the global computation and the local computation are
incomparable, i.e., in some cases one method is better, while
in other cases the other method is better.

\begin{figure}[t]
\fbox{
\begin{minipage}{\textwidth}
\begin{small}
\begin{center} 
\begin{tabbing}
xx \= xxx \= xxx \= xxx \= xxx \= xxx \= xxx \= xxx \= \kill
{\bf ALGORITHM} Find\_GHD\_via\_balancedSeparators \\
// high-level description  \\[1.1ex]
{\bf Input:} \hskip 7pt  hypergraph $H'$, integer $k\geq 0$. \\
{\bf Output:}  a GHD $\left<T,B_u,\lambda_u\right>$ of width $\leq k$ if exists, \\ 
\hskip 32pt   ``Reject'', otherwise.  \\[1.2ex]
 
{\bf Procedure} Find\_GHD ($H$: Hypergraph, $\Sp$: Set of special edges) \\
{\bf begin}   \\
1.\ \> Base Case: if there are only special edges left and $|\Sp| \leq 2$ then \\
\> \> stop and return a GHD with one node for each special edge. \\[1.1ex]
2.\ \> Find a balanced separator: for all functions $\lambda \colon E(H') \ra \{0,1\}$ \\
\>\> check if $\lambda_u$ is a balanced separator for $H$; \\
\> \> if none is found  then return Reject. \\[1.1ex]
3.\ \> Split $H$ into connected components $C_1, \dots, C_\ell$ w.r.t.\ $\lambda_u$: \\
  \>\> $C_i \subseteq V(H) \setminus B(\lambda_u)$ for every $i$ and $C_i$  is connected in $H$\\
  \>\> and each $C_i$ is maximal with this property.
\\[1.1ex]
4.\ \> Build the pair $\left<H_i,\Sp_i\right>$ (the subhypergraph based on $C_i$ and \\
\> \> the special edges in $C_i$) for each connected component $C_i$; \\
\> \> add $B(\lambda_u)$ as one more special edge to each set $\Sp_i$.
\\[1.1ex]
5.\ \> Call Find\_GHD($H_i$, $\Sp_i$) for each pair  $\left<H_i,\Sp_i\right>$; \\
\> \> each successful call returns a GHD $T_i$ for $H_i$ \\
\> \> if one call returns Reject then return Reject. \\[1.1ex]
6.\ \> Create and return a new GHD for $H$ having $\lambda_u$ as root: \\
\>\> each $T_i$ has one leaf node labelled $B(\lambda_u)$; \\
\>\> the new GHD is obtained by gluing together all subtrees $T_i$ \\
\>\> at the node with label $B(\lambda_u)$. \\
{\bf end}  \\[1.1ex]
{\bf begin} (* Main *)  \\
  \> {\bf  return} Find\_GHD ($H'$, $\emptyset$); \\
{\bf end}  

\end{tabbing}
\end{center}
\end{small}
\end{minipage}
}

\vspace{-1.8ex}
\caption{Recursive GHD-algorithm via balanced separators}
 \label{fig:rec-algo}
 \end{figure}  

\medskip

\noindent
{\bf New alternative approach: ``balanced separators''.}
We now propose a completely new approach, based on so-called ``balanced separators''. The latter are a familiar concept in graph theory \cite{DBLP:conf/stoc/FeigeM06,DBLP:conf/wea/SchildS15} -- denoting a set $S$ of vertices  of a graph $G$, such that the subgraph $G'$ 
induced by $V(G) \setminus S$ has no connected component larger than some given size, e.g., $\alpha \cdot |V|$ for some given $\alpha \in (0,1)$. In our setting, we may consider the label $\lambda_u$ at some 
node $u$ in a GHD as {\em separator\/} in the sense that we can consider connected components of the subhypergraph $H'$ of $H$ induced by $V(H) \setminus B_u$. Clearly, in a GHD, we may consider any node as the root. So suppose that $u$ is the root of some GHD. Moreover, as is shown in \cite{pods/FischlGP18} 
in  the proof of tractability of \rec{GHD,\,$k$} in case of the BIP, we may choose $\lambda_u$ such that 
$B(\lambda_u) = B_u$ if the subedges in $f(\HH,k)$  have been added to the hypergraph. 
\nopVLDB{*************
This is due to the fact that $B_u \subseteq B(\lambda_u)$ holds by the definition of GHDs and, conversely, if $\lambda_u$ contains
an edge $e$ with $e \not\subseteq B_u$, 
then we may replace $e$ in $\lambda_u$ by the subedge
$e' = e \cap B_u$, which is guaranteed to be contained in $f(\HH,k)$. 
*************}

By the HD-algorithm from \cite{2002gottlob}, we know that an HD of $H'$ (and, hence, a GHD of $H$) can 
be constructed in such a way that every subtree rooted at a child node $u_i$ of $u$ contains only one connected 
component $C_i$ of the subhypergraph of $H'$ induced by $V(H) \setminus B_u$. 
For our purposes, it is convenient to define the size of a component $C_i$ as the number of edges that have to be covered at some node in the subtree rooted at $u_i$ in the GHD. %
We thus call a separator $\lambda_u$ ``balanced'', if the size of each component $C_i$ is at most $|E(H')| / 2$.
The following observation is immediate: 

\begin{proposition}
In every GHD, there exists a node $u$ (which we may choose as the root) such 
that $\lambda_u$ is a balanced separator.
\end{proposition}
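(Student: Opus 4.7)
The plan is to reduce this proposition to the classical vertex-weighted centroid theorem for trees, with weights on $N(T)$ chosen to reflect a canonical ``location'' in $T$ of each hyperedge of $H$. The key setup exploits the connectedness condition of the GHD: for each hyperedge $e \in E(H)$, the set $T_e = \{u \in N(T) : e \subseteq B_u\}$ is nonempty by condition~(1) of Definition~\ref{def:GHD} and, being the intersection of the connected subtrees $\{u : x \in B_u\}$ for $x \in e$ (condition~(2)), it is itself a connected subtree of $T$. I would root $T$ arbitrarily and, for each $e$, let $r(e) \in T_e$ be the node of $T_e$ closest to that root, so that $T_e$ sits entirely in the subtree hanging below $r(e)$. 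Defining $w(u) = |\{e \in E(H) : r(e) = u\}|$ yields a nonnegative integer weighting of $N(T)$ with total weight $|E(H)|$.

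Next I would apply the classical centroid theorem to $(T,w)$, obtaining a node $u^* \in N(T)$ such that every connected component of $T \setminus \{u^*\}$ carries $w$-weight at most $|E(H)|/2$. Re-rooting $T$ at $u^*$, these components are precisely the subtrees $T_{u_i}$ hanging off the children $u_i$ of $u^*$. As recalled in the paragraph preceding the proposition (via the HD-algorithm of~\cite{2002gottlob}), the GHD can be assumed to be structured so that each $T_{u_i}$ corresponds bijectively to one connected component $C_i$ of $V(H) \setminus B_{u^*}$, with $C_i = V(T_{u_i}) \setminus B_{u^*}$.

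It then remains to prove $\mathrm{size}(C_i) \leq w(T_{u_i}) := \sum_{u \in T_{u_i}} w(u)$ for each $i$, from which the centroid bound yields the desired $\mathrm{size}(C_i) \leq |E(H)|/2$. Any edge $e$ counted in $\mathrm{size}(C_i)$ has some vertex $x \in e \cap C_i$; since $x \notin B_{u^*}$, the connected subtree $\{u' : x \in B_{u'}\}$ avoids $u^*$ and, by the bijection, lies inside $T_{u_i}$; together with $x \in e \subseteq B_{r(e)}$ this forces $r(e) \in T_{u_i}$, so each such edge contributes $1$ to $w(T_{u_i})$. The main obstacle is precisely the interpretation underlying this final step: one has to read ``the edges that have to be covered in $T_{u_i}$'' as the edges having at least one vertex in $C_i$ (these are exactly the edges forced into $T_{u_i}$ by the component structure), because the more liberal reading would double-count edges contained in $B_{u^*}$ that happen to be covered in multiple child subtrees, breaking the bound without additional assumptions such as the BIP of Section~\ref{sect:ghw-implementation}.
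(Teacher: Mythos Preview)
Your argument via the weighted centroid theorem is correct in substance and is exactly the standard route; the paper itself offers no proof at all, treating the proposition as ``immediate''. So there is nothing to compare against beyond noting that the centroid argument is presumably what the authors had in mind.

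One small point of presentation deserves tightening. You write that ``the GHD can be assumed to be structured so that each $T_{u_i}$ corresponds bijectively to one connected component $C_i$'', citing the HD-algorithm. But the proposition is stated for \emph{every} GHD, so you are not free to assume a normal form. Fortunately your proof does not actually need the bijection: all you use in the final paragraph is that each connected component $C_i$ of $V(H)\setminus B_{u^*}$ has its vertices confined to a single branch $T_{u_j}$ of $T\setminus\{u^*\}$, and this follows directly from the connectedness condition (for $x\in C_i$ the set $\{u':x\in B_{u'}\}$ is a subtree avoiding $u^*$, and adjacent vertices of $C_i$ share a covering node, so all of $C_i$ lands in one branch). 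With that observation, $\mathrm{size}(C_i)\le w(T_{u_j})\le |E(H)|/2$ exactly as you argue, regardless of whether the map from components to branches is injective or surjective. I would replace the bijection sentence by this direct consequence of condition~(2).

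Your closing remark about the interpretation of ``size'' is well-taken and correctly resolved: the algorithm in Figure~\ref{fig:rec-algo} (Step~4) confirms that the edges passed down to $T_{u_i}$ are precisely those meeting $C_i$, so your reading is the intended one and the inequality $\mathrm{size}(C_i)\le w(T_{u_j})$ holds without further hypotheses.
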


This property allows us to design the algorithm sketched in 
Figure~\ref{fig:rec-algo} to compute a GHD of $H'$. 
Actually, as will become clear below, we assume that the input to this recursive algorithm consists of a hypergraph plus a set 
$\Sp$ of ``special edges'' and we request that the GHD to be constructed contains ``special nodes'', which (a) have to be leaf nodes in the decomposition and (b) the $\lambda$-label of such a leaf node consists of a single special edge only. 
Each special edge contains the set of vertices $B_u$ of some balanced separator $\lambda_u$ 
further up in the hierarchy of recursive calls of the decomposition algorithm. 
The special edges are propagated to the recursive calls for subhypergraphs in order to 
determine how to assemble the
overall GHD from 
the GHDs of the subhypergraphs. This will become clearer in the proof sketch of 
Theorem~\ref{theo:correct-recursive}.

\begin{theorem}
\label{theo:correct-recursive}
Let $H$ be a hypergraph, let $k \geq 1$, and let $H'$ be obtained from $H$ by adding the subedges in
$f(H,k)$ to $E(H)$. Then the algorithm Find\_GHD\_via\_balancedSeparators given in 
Figure~\ref{fig:rec-algo} outputs a GHD of width $\leq k$ if one exists and rejects otherwise.
\end{theorem}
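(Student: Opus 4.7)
The plan is to prove soundness and completeness of Find\_GHD\_via\_balancedSeparators by induction on the recursion depth, with the invariant that every call on input $\left<\tilde H,\tilde\Sp\right>$ returns a GHD of $\tilde H$ of width $\leq k$ in which each special edge of $\tilde\Sp$ occurs as the $\lambda$-label of some leaf, whenever such a decomposition exists (and ``Reject'' otherwise). The special edges act as placeholders for separator bags of ancestor calls: forcing them to appear as leaf bags is precisely what lets Step~6 glue the subtrees returned by children onto the current separator node without breaking the connectedness condition.

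For soundness I would verify the three conditions of Definition~\ref{def:GHD} for the tree assembled in Step~6. Condition~(3) holds by construction, since $\lambda_u$ is chosen in Step~2 as a weight-$\leq k$ function and leaf bags for special edges have weight $1$. Condition~(1) follows because every edge of $H'$ either lies entirely in some component $C_i$ or is handled by a special leaf, and by the inductive hypothesis the recursive calls on the pairs $\left<H_i,\Sp_i\right>$ cover all such edges. Condition~(2) is the delicate point: vertices outside $B(\lambda_u)$ lie in exactly one $C_i$ and hence in only one subtree $T_i$, while vertices in $B(\lambda_u)$ may appear in several $T_i$ but, since $B(\lambda_u)$ is added as a special edge to every $\Sp_i$ in Step~4, the inductive hypothesis produces in each $T_i$ a leaf whose bag contains all of $B(\lambda_u)$; gluing these leaves to the root in Step~6 routes every occurrence of such a vertex through $u$, yielding a connected subtree.

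For completeness I would appeal to the Proposition immediately preceding the theorem: given any GHD $\mathcal D$ of $H'$ of width $\leq k$, one may take as root a node $u^*$ whose $\lambda$-label is a balanced separator. Because $f(H,k)$ has been added to $H$ when forming $H'$, the argument of the BIP tractability proof in~\cite{pods/FischlGP18} further permits us to assume $B(\lambda_{u^*}) = B_{u^*}$, so the subtrees of $\mathcal D$ rooted at children of $u^*$ correspond to the connected components $C_i$ of the subhypergraph induced by $V(H')\setminus B(\lambda_{u^*})$. Since Step~2 exhaustively enumerates all weight-$\leq k$ candidates, it will discover some balanced separator and not reject prematurely; applying the inductive hypothesis to each $\left<H_i,\Sp_i\right>$ yields the required subtrees, which Step~6 then reassembles into a GHD of $H'$ matching $\mathcal D$'s width.

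The main obstacle, and also the reason the recursion terminates, is identifying a suitable progress measure: the balanced-separator condition guarantees that no component $C_i$ contains more than $|E(H')|/2$ of the edges still to be covered, so a potential such as the number of non-special edges strictly decreases from one level to the next, which both justifies termination and makes the induction go through. The only remaining subtlety is the base case, where one must check that a flat tree consisting of a leaf per special edge is a valid GHD when $|\Sp|\leq 2$ and no non-special edges remain; this is immediate because a single edge has $\ghw = 1$ and, with at most two leaves, the connectedness condition reduces to checking the two bags, which is handled by the fact that each special edge was introduced with its full vertex set.
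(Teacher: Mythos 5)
Your proposal is correct and follows essentially the same route as the paper's proof sketch: induction on the recursion tree with the invariant that every special edge appears as the $\lambda$-label of a leaf, gluing the returned subtrees at their leaves labelled $B(\lambda_u)$, using the disjointness of the connected components to preserve the connectedness condition, and invoking the balanced-separator proposition for completeness. The only detail the paper supplies that you omit is the justification that the base case is always reachable, i.e., that $E(H)=\emptyset$ together with $|\Sp|\geq 3$ cannot occur (one of the special edges would then have to act as a separator of the remaining ones, which is excluded), a point needed to rule out spurious rejections but a minor addendum rather than a different argument.
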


\begin{proof}[Proof Sketch]
Steps 1 --5 of the algorithm in Figure~\ref{fig:rec-algo} 
essentially correspond to the computation of $\lambda_u$ and $B_u$ for the root node $u$ in the HD-computation
of \cite{2002gottlob}. The most significant modifications here are due to the handling of ``special edges'' in parameter $Sp$. 
A crucial property of the construction in \cite{2002gottlob} and also of our construction here is that each subtree below node $u$ in the decomposition only contains vertices from a single 
connected component $C_i$ w.r.t. $V(H) \setminus B_u$ (see Steps 3 and 4). Since the special edges come from 
such bags $B_u$, special edges can never be used as separators in recursive calls below. Hence, the base case (in Step~1) is reached for 
$E(H) = \emptyset$ and $|Sp| \leq 2$. Indeed, $|Sp| \geq 3$ cannot occur because then one of the special edges would have to be
a separator of the remaining special edges. Moreover, we can exclude special edges from the search for a balanced separator (in Step 2).

The correctness of assembling a GHD (in Step 6) from the results of the recursive calls
can be shown by structural induction on the
tree structure of a GHD: suppose that the recursive calls in the 
algorithm for each hypergraph $H_i$ with set $\Sp_i$ of special edges are correct, i.e., they yield for each hypergraph $H_i$ a GHD ${\cal D}_i$
such that each special edge $s$ in $\Sp_i$ is indeed covered by a leaf node 
in ${\cal D}_i$ whose $\lambda$-label consists of $s$ only.  In particular, 
since $s = B(\lambda_u)$ is a special edge contained in $\Sp_i$ for each $i$, 
there exists a leaf node $t_i$ in  ${\cal D}_i$ with $\lambda_{t_i} = \{s\}$. 
In a GHD, any node can be taken as the root. We thus choose
$t_i$ as the root node in each GHD ${\cal D}_i$. By construction, we have 
$B_{u} = B_{t_1} = \dots = B_{t_\ell}$. Moreover, any two subhypergraphs $H_i$, $H_j$ contain the vertices from two different connected components. Hence, apart from the vertices  contained in the special edge $s$, any two GHDs
${\cal D}_i$, ${\cal D}_j$  with $i \neq j$ 
have no vertices in common.  We can therefore construct a 
GHD ${\cal D}$ of $H'$ by deleting the root node $t_i$ from each 
GHD ${\cal D}_i$ and by appending the child nodes of each $t_i$ directly as 
child nodes of $u$. Clearly, the connectedness condition is satisfied in the 
resulting decomposition. 
\end{proof}

\nopVLDB{**********************
For the {\em runtime\/} of this recursive algorithm, we make two observations: first, the recursion depth is logarithmically 
bounded by the number of edges in $H'$. And second, the number of {\em balanced\/} separators to be inspected at each step is typically much smaller than the number of all possible separators, as 
the following example illustrates. 

\begin{example}
\label{ex:balSep}
Let $H$ be a hypergraph which is a cycle of size $n$ 
with $V(H) = \{v_1, \dots, v_n\}$ and 
$E(H) = \{ \{v_1,v_2\}$, $\dots,$ 
$\{v_{n-1},v_n\}, \{v_{n},v_1\}\}$.
Moreover, suppose that we are interested in sets $S$ of edges with cardinality 
$k = 2$ to split the cycle into two disconnected paths. 
Clearly, every choice $S$ of two edges 
that leave at least one node ``distance'' between them 
separates $H$ into two paths, i.e., $S = \{ \{v_{i},v_{i+1}\}$, $\{v_{j},v_{j+1}\}\}$
with $j > i+1$ and if $j=n-1$ then $i > 1$ and if $j=n$ then $i > 2$. 
So in total, there exist $\Theta(n^{2})$ separators. For $k = 2$, 
this bound is equal to $\Theta (n^{k})$. 
On the other hand, we get a balanced separator only by choosing two edges which are maximally
(or at least ``almost'' maximally) apart
from each other. Hence, there exist only $\mathcal{O} (n)$ balanced separators. For $k = 2$, this is equal to $\mathcal{O} (n^{k/2})$. 
\end{example}
**********************}

If we look at the number of solved instances in Table~\ref{tab:ghw_instances_solved}, 
we see that the 
recursive algorithm via balanced separators (reported in the last column labelled \balsep) has the least number 
of timeouts due to the fast identification of negative instances (i.e., those with no-answer), 
where it often detects quite fast that a given hypergraph 
does not have a balanced separator of desired~width.
As $k$ increases, the performance of the balanced separators approach deteriorates. 
This is due to $k$ in the exponent of the running time of our algorithm, i.e.\ we need to check 
for each of the possible $\mathcal{O} (n^{k+1})$ 
combinations of $\leq k$ edges if it constitutes a balanced separator. 
\begin{table}
  \centering 
    \caption{GHW of instances with average runtime in s}
  \label{tab:ghw}%
        \begin{tabular}{lrrr}
    \toprule
        \multicolumn{1}{c}{$\hw \ra \ghw$}  & \multicolumn{1}{c}{yes} & 
\multicolumn{1}{c}{no} & \multicolumn{1}{c}{timeout} \\
     \midrule
    $3 \ra 2$     & 0   & 309 (10)    & 1   \\
     $4 \ra 3$    & 0   & 262 (57)    & 124   \\
    $5 \ra 4$     & 0   & 148 (13)    & 279   \\
    $6 \ra 5$     & 18 (129)   & 180 (288)    & 261   \\
     \bottomrule
    \end{tabular}%
\end{table}%
 
\medskip

\noindent
{\bf Empirical results.}
We now look at Table \ref{tab:ghw}, where we
report for all hypergraphs with $\hw \leq k$ and $k \in  \{3,4,5,6\}$, whether 
$\ghw \leq k-1$ could be verified. 
To this end, we run our three algorithms (``global'', ``local'', and ``balanced separators'') in parallel and stop the computation, as soon as one terminates
(with answer ``yes'' or ``no''). The number in parentheses refers to the average runtime 
needed by the fastest of the three algorithms in each case. A timeout occurs if none of the three algorithms terminates within 3,600 seconds.
It is interesting to note that in the vast majority of cases, no improvement of the width is possible when we switch from $\hw$ to $\ghw$: in 98\% of the 
solved cases and 57\% of all instances with $\hw \leq 6$, 
$\hw$ and $\ghw$ have identical values. Actually, we think that the high percentage of the {\em solved cases\/} 
gives a more realistic picture than the percentage of {\em all cases\/}
for the following reason: our algorithms (in particular, the ``global'' and ``local'' computations) need particularly long time for negative instances. This is due to the fact that in a negative case, ``all'' possible choices of $\lambda$-labels for a node $u$ in the GHD have to be tested before we can be sure that no GHD of $H$ (or, equivalently, no HD of $H'$) of desired width exists. 
Hence, it seems plausible that the timeouts are mainly due to negative instances.
This also explains why our new GHD algorithm in Figure~\ref{fig:rec-algo}, which is 
particularly well suited for negative instances,
has the least number of timeouts.

We conclude this section with a final observation: in Figure~\ref{fig:hw}, we 
had many cases, for which only some upper bound $k$ on the $\hw$ could be determined, namely those cases, where the attempt to solve \rec{HD,\,$k$}\
yields a yes-answer and the attempt to solve \rec{HD,\,$k-1$} gives a timeout.
In several such cases, we could get (with the balanced separator approach) a 
no-answer for the \rec{GHD,\,$k-1$} problem, which implicitly gives a 
no-answer for the problem \rec{HD,\,$k-1$}. 
In this way, the alternative approach to the $\ghw$-computation is also profitable for the $\hw$-computation:
for 827 instances with $\hw \leq 6$, we were not able to determine the exact hypertree width. 
Using our new $\ghw$-algorithm, we closed this gap for 297 instances; for these instances $\hw = \ghw$ holds. 

To sum up, we now have a total of 1,778 (58\%) instances for which we determined the exact hypertree width and a total of 1,406 instances (46\%)
for which we determined the exact generalized hypertree width. Out of these, 1,390 instances had 
identical values for $\hw$ and $\ghw$.
In 16 cases, we found an improvement of the width by 1 when moving from $\hw$ to $\ghw$, namely from $\hw = 6$ 
to $\ghw = 5$. In 2 further cases, we could show $\hw \leq 6$ and $\ghw \leq 5$, but the attempt to check 
$\hw = 5$ or $\ghw = 4$ led to a timeout. 
Hence, in response to {\it Goal 6}, $\hw$ is equal to $\ghw$ in 45\% of the cases if we consider all instances
and in 60\% of the cases (1,390 of 2,308) with small width ($\hw \leq 6$). However, if we consider the fully 
solved cases (i.e., where we have the precise value of $\hw$ and $\ghw$), then $\hw$ and $\ghw$ coincide in 
99\% of the cases (1,390 of 1,406).

\section{Fractionally Improved Decompositions}
\label{sect:fractionally-improved}

The algorithms proposed in the literature for computing FHDs are very expensive. 
For instance, even the algorithm used for the tractability result in \cite{FGP2017tractablefhw} 
for hypergraphs of low degree is problematical since it involves a double-exponential factor in the degree. 
Therefore, we investigate the potential of a simplified method to compute approximated FHDs.
Below, we present two algorithms for such approximated FHD computations -- with a trade-off between computational cost and quality of the approximation.
	
\smallskip
$\bullet$ \
The simplest way to obtain a fractionally improved (G)HD is to 
take either a GHD or HD as input and compute a fractionally improved (G)HD. To this end, 
an algorithm (which we refer to as  {\sf SimpleImproveHD})
visits each node $ u $ of a given GHD or HD and computes an optimal  fractional edge cover $ \gamma_u $ for the set
$B_u$ of vertices.
This algorithm is simple and computationally inexpensive, provided that we can start off with a GHD or HD that was computed before. In our case, we simply took the HD resulting from the $\hw$-computation reported in 
Figure~\ref{fig:hw}. 
Clearly, this approach is rather naive and the dependence on a concrete HD is 
unsatisfactory. We therefore move to a more sophisticated algorithm described next.

\smallskip
$\bullet$ \
The algorithm  {\sf FracImproveHD} has as input a hypergraph $ H $ and numbers $k,k' \geq 1$, where 
$k$ is an upper bound on the $\hw$ and $k'$ the desired fractionally improved $\hw$. 
We search for an FHD ${\cal D'}$ with 
${\cal D'} = \mathit{SimpleImproveHD}({\cal D})$ for some HD ${\cal D}$ of $H$ with $\mathit{width}({\cal D})  \leq k$
and $\mathit{width}({\cal D'}) \leq k'$.
In other words, this algorithm searches for the best fractionally improved HD over all HDs of width $\leq k$. 
Hence, the result is independent of any concrete HD. 

\smallskip

The experimental results with these algorithms for computing fractionally improved HDs 
are summarized in Table~\ref{tab:fhw-improve} and Table~\ref{tab:fhw-fracimprove}. 

We have applied these algorithms to all 
hypergraphs for which $\hw \leq k$ 
with $k \in \{2,3,4,5\}$ is known from Figure~\ref{fig:hw}.
The various columns of the Tables~\ref{tab:fhw-improve} and~\ref{tab:fhw-fracimprove} are as follows: the first column (labelled $\hw$) refers to the 
(upper bound on the) $\hw$ according to Figure~\ref{fig:hw}. The next 3 columns, labelled $\geq 1$, 
$[0.5,1)$, and $[0.1,0.5)$ tell us, by how much the width can be improved (if at all) 
if we compute an FHD by one of the two algorithms. 
We thus distinguish the 3 cases if, for a hypergraph of $\hw \leq k$, 
we manage to construct an FHD of width $k-c$ for 
$c \geq 1$, $c \in [0.5,1)$, or  $c \in [0.1,0.5)$. The column with label ``no'' refers to the cases where no
improvement at all
or at least no improvement by $c \geq 0.1$
was possible. The last column counts the number of timeouts. 

For instance, in the first row of Table~\ref{tab:fhw-improve}, we see that
(with the SimpleImproveHD algorithm and starting from the HD obtained by the $\hw$-computation of  Figure~\ref{fig:hw})
out of 238 hypergraphs 
with $\hw = 2$, no improvement was possible in 172 cases. In the remaining 66 cases, 
an improvement to a width of at most $2 - 0.5$ was possible in 25 cases and an improvement to $k-c$ with 
$c \in [0.1,0.5)$ was possible in 41 cases. For the hypergraphs with $\hw = 3$ in Figure~\ref{fig:hw}, 
almost half of the hypergraphs (141 out of 310) allowed at least some improvement, in particular, 104 by 
$c \in [0.5,1)$ and 12 even by at least 1. The improvements achieved for the hypergraphs with $\hw \leq 4$
and $\hw \leq 5$ are less significant.
 
\begin{table}[t]
  \centering 
    \caption{Instances solved with \improvehd}
  \label{tab:fhw-improve}%
       \begin{tabular}{crrrrr}
        \multicolumn{1}{c}{$\hw$}  & \multicolumn{1}{c}{$\geq1$} & 
\multicolumn{1}{c}{$[0.5,1)$} & \multicolumn{1}{c}{$[0.1,0.5)$} & \multicolumn{1}{c}{no} & \multicolumn{1}{c}{timeout} \\
     \midrule

    $2$		& $0$ 		& $41$ 		& $25$ 		& $172$ 		& $0$ \\
    $3$ 	& $12$ 		& $104$ 		& $25$ 		& $169$ 	& $0$ \\
    $4$ 	& $9$ 		& $55$ 		& $11$		& $311$ 	& $0$ \\
    $5$ 	& $20$		& $14$		& $11$		& $382$ 	& $0$ \\
    $6$     & $12$      & $60$ & $80$ & $309$ & $0$ \\
     \bottomrule 
    \end{tabular}%
\end{table}%

\begin{table}[t]
  \centering 
    \caption{Instances solved with \fracimprovehd}
  \label{tab:fhw-fracimprove}%
    \begin{tabular}{crrrrr}
        \multicolumn{1}{c}{$\hw$}  & \multicolumn{1}{c}{$\geq1$} & 
\multicolumn{1}{c}{$[0.5,1)$} & \multicolumn{1}{c}{$[0.1,0.5)$} & \multicolumn{1}{c}{no} & \multicolumn{1}{c}{timeout} \\
     \midrule
    $2$     & 0   & 46    & 29  & 160  & 1  \\
     $3$    & 14  & 116   & 21  & 135  & 24 \\
    $4$     & 11  & 81    & 2  & 8  & 284  \\
    $5$     & 18  & 126    & 59  & 2  & 222  \\
    $6$     & 28   & 149    & 95  & 4  & 183  \\
     \bottomrule
    \end{tabular}%
    
\vspace{-10pt}    
    
\end{table}%

The results obtained with our implementation of the FracImproveHD algorithm are displayed in  Table \ref{tab:fhw-fracimprove}. 
We see that the number of hypergraphs which allow for a fractional improvement of the width by at least 0.5 
or even by 1 is often bigger than with SimpleImproveHD -- in particular in the cases where $k' \leq k$ with 
$k \in \{4,5\}$ holds. 
In the other cases, the results obtained with the naive SimpleImproveHD algorithm 
are not much worse than with the more sophisticated FracImproveHD algorithm. 

\section{Related Work}
\label{sect:related}
We distinguish several types of works that are highly relevant to ours. 
The works most closely related are the descriptions of HD, GHD and FHD algorithms in \cite{2002gottlob,pods/FischlGP18} and the implementation of HD computation by the 
\detkdecomp\ 
program reported in \cite{DBLP:journals/jea/GottlobS08}. 
We have extended these works in several ways. Above all, we have incorporated our analysis tool 
(reported in 
Sections \ref{sect:HyperBench} and \ref{sect:EmpiricalAnalysis}) and the GHD and FHD computations 
(reported in 
Sections \ref{sect:ghw-implementation} and \ref{sect:fractionally-improved})
into the \detkdecomp\ program -- resulting in our \newdetkdecomp\ library, which is openly available on GitHub. 
For the GHD computation, we have added heuristics to speed up the basic algorithm from \cite{pods/FischlGP18}. Moreover, we have proposed a 
novel approach via balanced separators, which allowed us to significantly extend the range of instances for which 
the GHD computation terminates in reasonable time. 
We have also introduced a new form of decomposition method: the 
fractionally improved decompositions (see Section \ref{sect:fractionally-improved}), which allow for a practical, lightweight form of FHDs.

The second important input to our work comes from the various sources 
\cite{%
arocena2015ibench%
,BenediktCQs%
,DBLP:conf/pods/BenediktKMMPST17%
,berg2017maxsat%
,geerts2014mapping%
,DBLP:journals/jea/GottlobS08%
,2015leis%
,shrjainSQLShare%
,tpch}
which we took our CQs and CSPs from. 
Note that our main goal was not to add further CQs and/or CSPs to these benchmarks. 
Instead, we have aimed at taking and combining existing, openly accessible benchmarks of CQs and CSPs, convert them into hypergraphs, which are then thoroughly analysed. Finally, the hypergraphs and the analysis results are made openly accessible again. 

The third kind of works highly relevant to ours are previous analyses of CQs and CSPs. 
To the best of our knowledge, Ghionna et al.\ \cite{DBLP:conf/icde/GhionnaGGS07} presented the first systematic study of HDs 
of benchmark CQs from TPC-H. However, Ghionna et al.\ 
pursued a research goal different  from ours in that they primarily wanted to find out to what extent HDs can actually 
speed up query evaluation. They achieved very positive results in this respect, which 
have recently been confirmed by the work of Perelman et al.\ \cite{Duncecap15a}, Tu et al. \cite{tu2015duncecap} and Aberger et al. \cite{DBLP:conf/sigmod/AbergerTOR16,DBLP:journals/tods/AbergerLTNOR17} on query evaluation using FHDs.
As a side result, Ghionna et al.\  also detected that CQs tend to have low hypertree width (a finding which was later confirmed in 
\cite{DBLP:journals/pvldb/BonifatiMT17,DBLP:conf/sigmod/PicalausaV11} and also in our study).  
In a pioneering effort, Bonifati, Martens, and Timm \cite{DBLP:journals/pvldb/BonifatiMT17} have recently
analysed an unprecedented, massive amount of  queries: they investigated 180,653,910 queries from 
(not openly available) query logs of several popular SPARQL endpoints. After elimination of duplicate queries, there were 
still 
56,164,661 queries left, out of which 26,157,880 queries were in fact CQs. 
The authors thus significantly extend previous work by 
Picalausa and Vansummeren \cite{DBLP:conf/sigmod/PicalausaV11}, who 
analysed 3,130,177 SPARQL queries posed by humans and software robots at the DBPedia SPARQL endpoint. 
The focus in \cite{DBLP:conf/sigmod/PicalausaV11} is on structural properties of SPARQL queries such as keywords used and variable structure in 
optional patterns. There is one paragraph devoted to CQs, where it is noted that 99.99\% of ca.\  2 million CQs considered 
in \cite{DBLP:conf/sigmod/PicalausaV11} are acyclic.

Many of the CQs  (over 15 million) analysed in \cite{DBLP:journals/pvldb/BonifatiMT17}
have arity 2 (here we consider the maximum arity  of all atoms in a CQ as the arity of the query), which means that all triples in such a SPARQL query have a constant at 
the predicate-position. 
Bonifati et al.\ made several interesting observations concerning the shape of these graph-like queries. For instance, they detected that
exactly one of these queries has $\tw = 3$, while all others have
$\tw \leq 2$ (and hence $\hw \leq 2$).
As far as the CQs of arity 3 are concerned (for CQs expressed as SPARQL queries, this is the maximum arity achievable), among many characteristics, also the hypertree width was computed by using the original $\detkdecomp$ program from 
\cite{DBLP:journals/jea/GottlobS08}. 
Out of 6,959,510 CQs of arity 3, 
only 86 (i.e. 0.01\textperthousand) turned out to have $\hw = 2$ and 8 queries had $¸\hw = 3$, while all other CQs of arity 3 are acyclic.  
Our analysis confirms that, also for non-random CQs of arity $> 3$, the hypertree width indeed tends to be low, with the majority of queries being even acyclic.

For the analysis of CSPs, much less work has been done. 
Although it has been shown that exploiting {(hyper-)}
tree decompositions may significantly improve the performance of CSP solving 
\cite{DBLP:journals/aicom/AmrounHA16,DBLP:journals/jetai/HabbasAS15,DBLP:conf/sara/KarakashianWC11,LalouHA09}, 
a systematic study on the (generalized) hypertree width of CSP instances has only been carried out by few works \cite{DBLP:journals/jea/GottlobS08,LalouHA09,Schafhauser06}. To the best of our knowledge, we are the first to analyse the \hw, \ghw, and \fhw\ of ca.\ 2,000 CSP instances, where most of these instances have not been studied in this respect before.

It should be noted that the focus of our work is different from the above mentioned previous works: above all, we wanted to test the 
practical feasibility of various algorithms for HD, GHD, and FHD computation (including both, previously presented algorithms and new ones developed as part of this work). As far as our repository of hypergraphs (obtained from CQs and CSPs) is concerned, we emphasize open accessibility. 
Thus, users can analyse their CQs and CSPs (with our implementations of HD, GHD, and FHD algorithms) or they can analyse 
new decomposition algorithms (with our hypergraphs, which cover quite a broad range of characteristics). 
In fact, in the recent work on FHD computation via SMT solving \cite{CP/FichteHLS18}, 
the Hyperbench benchmark has 
already been used for 
the experimental evaluation. 
In \cite{CP/FichteHLS18} a novel approach to $\fhw$ computation via an efficient encoding of the 
check-problem for FHDs to SMT (SAT modulo Theory) is presented. The tests were carried out with 2,191 hypergraphs from 
the initial version of the HyperBench. For all of these hypergraphs we have established at least some upper bound on the $\fhw$ either by our $\hw$-computation or by one of our new algorithms presented in
Sections \ref{sect:ghw-implementation} and \ref{sect:fractionally-improved}.
In contrast, the exact algorithm in \cite{CP/FichteHLS18} found FHDs only for 1.449 instances (66\%). 
In 852 cases, both our algorithms and the algorithm in \cite{CP/FichteHLS18} found FHDs of the same width; in 560 cases, an FHD of lower width was found in \cite{CP/FichteHLS18}.
By using the same benchmark for the tests,
the results  in \cite{CP/FichteHLS18} and ours 
are comparable and have thus provided 
valuable input for future improvements of the algorithms by 
combining the 
different strengths and weaknesses of the two 
approaches.

The use of the same benchmark has also allowed us to provide feedback to the authors of \cite{CP/FichteHLS18} for debugging their system: in 9 out of 2,191  cases, the ``optimal'' value for the $\fhw$ computed in [19] was apparently erroneous, since it was higher than the $\hw$ found out by our analysis; note that upper bounds on 
the width are, in general, more reliable than lower bounds since it is easy to verify if a given decomposition indeed has the desired properties, whereas ruling out the existence 
of a decomposition of a certain width is a complex and error-prone task.

\section{Conclusion}
\label{sect:conclusion}

In this work, we have presented HyperBench, a new and comprehensive benchmark of hypergraphs derived from CQs and CSPs from various areas, together with the results of extensive empirical analyses with this benchmark.

\medskip

\noindent
{\bf Lessons learned.} The empirical study has brought many insights.
Below, we summarize the most important lessons from our studies.

\smallskip
$\bullet$ \
The finding of \cite{DBLP:journals/pvldb/BonifatiMT17,DBLP:conf/sigmod/PicalausaV11} that 
non-random CQs have low hypertree width 
has been confirmed by our analysis, even if
(in contrast to SPARQL queries)
the arity of the CQs  is not bounded  by 3. 
For random CQs and CSPs, we have detected a correlation between the arity and the hypertree width, although 
also in this case, the increase of the $\hw$ with increased arity is not dramatic. 

\smallskip
$\bullet$ \
In \cite{pods/FischlGP18}, several hypergraph invariants were identified, which make the computation of GHDs and the approximation of FHDs tractable. We have seen that, at least for 
non-random instances, these invariants indeed have low values.

\smallskip
$\bullet$ \
The reduction of the $\ghw$-computation problem to the $\hw$-computation problem in case of low intersection 
width turned out to be more problematical than the theoretical trac\-tability results from \cite{pods/FischlGP18} had suggested. Even the improvement by ``local'' computation of the additional subedges did not help much.  
However, we were able to improve this significantly by presenting 
a new algorithm based on ``balanced separators''. In particular for negative instances (i.e., those with a no-answer), this approach proved very effective.

\smallskip
$\bullet$ \
An additional benefit of the new $\ghw$-algorithm based on ``balanced separators'' is that it allowed us to also fill gaps in the $\hw$-computation. Indeed, in several cases, 
we managed to verify 
$\hw \leq k$ for some $k$ but we could not show $\hw \not\leq k-1$, due to a timeout for
\rec{HD,\,$k-1$}. 
By establishing $\ghw \not\leq k-1$ with our 
new GHD-algorithm, we have implicitly showed $\hw \not\leq k-1$. This allowed us to 
compute the exact $\hw$ of many further~hypergraphs.

\smallskip
$\bullet$ \
Most surprisingly, the discrepancy between $\hw$ and $\ghw$ is much lower than expected. 
Theoretically, only the upper bound $\hw \leq 3 \cdot \ghw + 1$ is known. 
However, in practice, when considering hypergraphs of $\hw \leq 6$,
we could show that in 53\% of all cases, $\hw$ and $\ghw$ are simply identical. 
Moreover, in {\em all\/} cases when one of our implementations of $\ghw$-computation terminated on instances
with $\hw \leq 5$, we got identical values for $\hw$ and $\ghw$. 

\medskip

\noindent
{\bf Future work.} Our empirical  study has also given us many hints for future directions of research. We find the following tasks particularly urgent and/or rewarding.

\smallskip
$\bullet$ \
So far, we have only implemented the $\ghw$-computation in case of low
intersection width. In \cite{pods/FischlGP18}, tractability of the  
\rec{GHD,\,$k$} problem was also proved for the more relaxed bounded multi-intersection width. Our empirical results in 
Figure~\ref{fig:hg-props-detail} show that, apart from the random CQs and random CSPs, 
the 3-multi-intersection is $\leq 2$ in almost all cases. It seems therefore worthwhile to 
implement and test also the BMIP-algorithm from \cite{pods/FischlGP18}. 

\smallskip
$\bullet$ \
The three approaches for $\ghw$-computation presented here turned out to have complementary strengths and weaknesses. This was profitable when running all three algorithms in parallel and taking the result of the first one that terminates (see Table~\ref{tab:ghw}). In the future, we also want to implement a more sophisticated combination of the various approaches: for instance, one could try to apply our new ``balanced separator'' algorithm recursively only down to a certain recursion depth (say depth 2 or 3) to split a big given hypergraph into smaller subhypergraphs and then continue with the 
 ``global'' or ``local'' computation from 
Section~\ref{sect:ghw-implementation}.

\smallskip
$\bullet$ \
Our new approach to $\ghw$-computation via ``balanced separators'' proved quite effective in our experiments. However, further theoretical underpinning 
of this 
approach is missing. The empirical results obtained for our 
new GHD algorithm via balanced separators suggest that the
number of balanced separators is often drastically 
smaller than the number of arbitrary separators. %
We want to determine a realistic upper bound 
on the number of balanced separators 
in terms of $n$ (the number of edges) and $k$ (an upper bound on the width). This will then allow us to compute also a realistic upper bound on the runtime of this new algorithm.

\smallskip
$\bullet$ \
Finally, we want to further extend the HyperBench 
benchmark and tool in several directions. 
We will thus incorporate further implementations of decomposition algorithms from the literature 
such as the GHD- and FHD computation in \cite{moll2012} or the polynomial-time FHD computation for hypergraphs of bounded degree in \cite{FGP2017tractablefhw}. 
Moreover, we will continue to fill in hypergraphs from further sources of CSPs and CQs. For instance, in 
  \cite{DBLP:journals/tods/AbergerLTNOR17,DBLP:conf/pods/CarmeliKK17,DBLP:conf/icde/GhionnaGGS07,DBLP:conf/cikm/GhionnaGS11}  
  a collection of CQs for the experimental evaluations in those papers is mentioned. 
We will invite the authors to disclose these CQs and incorporate them into the HyperBench benchmark.

\smallskip
$\bullet$ \
Very recently, a new, huge, publically available query log has been reported in \cite{malyshevgetting}. It contains 
over 200 million SPARQL queries on Wikidata. In the paper, the anonymisation and publication of the query logs is mentioned as future work.  However, on their web site, the authors have meanwhile 
made these queries available. At first glance, these queries seem to display a similar behaviour as the SPARQL queries collected by Bonifatti et al.\ \cite{DBLP:journals/pvldb/BonifatiMT17}: there is a big number of single-atom queries and again, the vast majority of the queries is acyclic. A detailed analysis of the query log in the style of \cite{DBLP:journals/pvldb/BonifatiMT17} constitutes an important goal for future research.

\subsection*{Acknowledgements}
We would like to thank Angela Bonifati, Wim Martens, and Thomas Timm for sharing most of the hypergraphs with $\hw \geq 2$ from their work  
\cite{DBLP:journals/pvldb/BonifatiMT17} and for their effort in anonymising these hypergraphs, which was required by the license restrictions.

\clearpage
\appendix

 \noindent
{\bf \huge Appendix}

\bigskip
\noindent
In this appendix, we present further details of our analyses. First, we will present the web tool to browse and discover the hypergraphs that we have used. In addition, above all, we provide additional figures and tables to allow for a more fine-grained view on the CSP instances.
Recall from Section  \ref{sect:HyperBench} that we are dealing with 3 classes of CSP instance here: 
{\em CSP Application\/} and {\em CSP Random\/}, which are 
both taken from \url{xcsp.org} \cite{xcsp}, and  {\em CSP Other\/}, which have already been analysed w.r.t.\ $\hw$ in previous 
works \cite{DBLP:journals/jea/GottlobS08,berg2017maxsat}.
Due to lack of space, the main body of the text contains only figures and tables with aggregated values for all CSPs. 
Below, additional details 
are provided for each figure and table from the main body by distinguishing the two classes of CQs and 
three classes of CSPs.

\section{Web tool}
\label{sec:webtool}

  \begin{figure}[t] 
    \centering 
     \fbox{\includegraphics[width=0.45\textwidth]{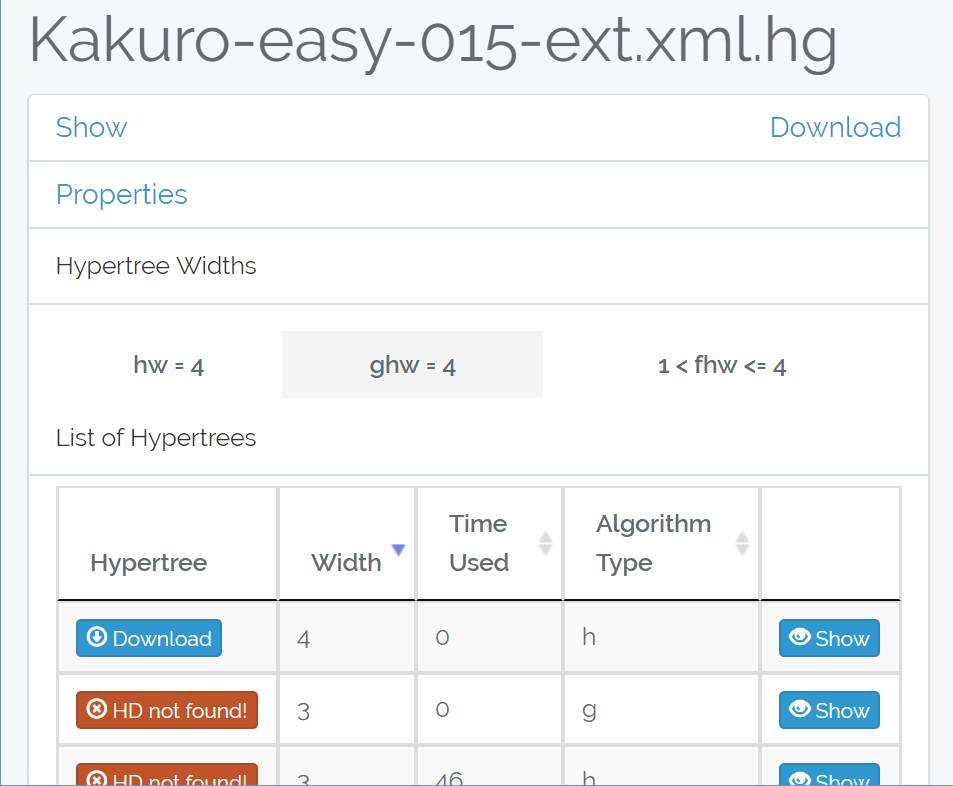}}
    \caption{HyperBench web tool: available at  \\ {\tt http://hyperbench.dbai.tuwien.ac.at}} 
    \label{fig:screenshot} 
  \end{figure}  
  
The hypergraphs in the benchmark and the results of the analyses of these
hypergraphs can be accessed via a web tool, which is available at 
\url{http://hyperbench.dbai.tuwien.ac.at}. There we have uploaded 
3,070 hypergraphs 
together with over 5,518 HDs and the output of over 16,585 further algorithm runs, where no HD of desired width was found
(either because a lower bound on the width was established
or the algorithm timed out). 
For example, in the screenshot in 
Figure~\ref{fig:screenshot} the results for the CSP instance 
``Kakuro-easy-015-ext.xml.hg'' are displayed. The hypertree width of the instance
is calculated according to the list of HDs
at the bottom of the screenshot. For this instance we have several algorithm runs, some of which led to a HD, some did not (``HD not found''). With these
we were able to pinpoint that $\hw=\ghw=4$ holds. All instances can be explored 
in such a way.

Additionally, we allow the user 
to browse, download and inspect hypergraph categories 
presented in this work. In the near future, we will also provide a search
interface to download instances having specific properties (e.g. $\hw < 5$ 
or {\it BIP} $< 3$, etc.) and to contribute to the benchmark 
by uploading hypergraphs, which are then
analysed and incorporated into our HyperBench benchmark.

\section{Further Details for Section \ref{sect:HyperBench}}
\label{app:sect-three}

Table \ref{tab:hg-sizes-detail} presents the exact numbers used in Figure \ref{fig:hg-sizes}.

\begin{table}[htbp]
  \centering 
       \begin{tabular}{rrrrrrr}
       \multicolumn{7}{c}{\bf Vertices} \\
    \toprule
         & \multicolumn{1}{c}{\rot{$1-10$}} & 
\multicolumn{1}{c}{\rot{$11-20$}} & \multicolumn{1}{c}{\rot{$21-30$}} & 
\multicolumn{1}{c}{\rot{$31-40$}} & \multicolumn{1}{c}{\rot{$41-50$}} & \multicolumn{1}{c}{\rot{$>50$}} \\
     \midrule
    {\it CQ Application}     & 199     & 126     & 67    & 51    & 29 & 63 \\
    {\it CQ Random}     & 28     & 59     & 58    & 62    & 60 & 233 \\
    {\it CSP Application}     & 2     & 204    & 97    & 133     & 27 & 627 \\
    {\it CSP Random}  & 10    & 224     & 269     & 210     & 0 & 150 \\
    {\it CSP Other}  & 0    & 3     & 2     & 0     & 7 & 70 \\ \bottomrule
    \end{tabular}%
    \;
       \begin{tabular}{rrrrrrr}
       \multicolumn{7}{c}{\bf Edges} \\
    \toprule
         & \multicolumn{1}{c}{\rot{$1-10$}} & 
\multicolumn{1}{c}{\rot{$11-20$}} & \multicolumn{1}{c}{\rot{$21-30$}} & 
\multicolumn{1}{c}{\rot{$31-40$}} & \multicolumn{1}{c}{\rot{$41-50$}} & \multicolumn{1}{c}{\rot{$>50$}} \\
     \midrule
    {\it CQ Application}     & 462     & 47     & 11    & 1    & 6 & 8 \\
    {\it CQ Random}     & 75     & 96     & 106    & 106    & 117 & 0 \\
    {\it CSP Application}     & 1     & 5    & 12    & 76     & 85 & 911 \\
    {\it CSP Random}  & 20    & 55     & 50     & 110     & 0 & 628 \\
    {\it CSP Other}  & 0   & 2     & 1     & 2     & 2 & 75 \\ \bottomrule
    \end{tabular}%
    \;
    \begin{tabular}{rrrrrr}
       \multicolumn{6}{c}{\bf Arities} \\
    \toprule
         & \multicolumn{1}{c}{\rot{$1-5$}} & 
\multicolumn{1}{c}{\rot{$6-10$}} & \multicolumn{1}{c}{\rot{$11-15$}} & 
\multicolumn{1}{c}{\rot{$16-20$}} & \multicolumn{1}{c}{\rot{$>20$}} \\
     \midrule
    {\it CQ Application}     & 315     & 64     & 70    & 41    & 45 \\
    {\it CQ Random}     & 102     & 144     & 160    & 94    & 0 \\
    {\it CSP Application}     & 667     & 263    & 1    & 56     & 103 \\
    {\it CSP Random}  & 594    & 244     & 25     & 0     & 0  \\
    {\it CSP Other}  & 74   & 7     & 1     & 0     & 0  \\ \bottomrule   
    \vspace{1pt}
    \end{tabular}%
  \caption{Hypergraph Sizes}
  \label{tab:hg-sizes-detail}%
\end{table}%

\clearpage

\section{Further Details for Section  \ref{sect:EmpiricalAnalysis}}
\label{app:sect-four}

In Table~\ref{tab:hg-props}, statistics on several hypergraph invariants were provided, namely degree, intersection width, 
$c$-multi-intersection width for $c \in \{3,4\}$, and VC-dimension. 
In Table \ref{tab:hg-props-detail}
and Figure  \ref{fig:hg-props-detail}, additional details are provided by distinguishing 
the three classes of CSP instances.

\begin{table}[htbp]
  \centering 
       \begin{tabular}{rrrrrr}
       \multicolumn{6}{c}{\it CQ Application} \\
    \toprule
        \multicolumn{1}{c}{$i$}  & \multicolumn{1}{c}{Deg} & 
\multicolumn{1}{c}{BIP} & \multicolumn{1}{c}{3-BMIP} & 
\multicolumn{1}{c}{4-BMIP} & \multicolumn{1}{c}{VC-dim} \\
     \midrule
    0     & 0     & 0     & 118   & 173   & 10 \\
    1     & 2     & 421   & 348   & 302   & 393 \\
    2     & 176   & 85    & 59    & 50    & 132 \\
    3     & 137   & 7     & 5     & 5     & 0 \\
    4     & 87    & 5     & 5     & 5     & 0 \\
    5     & 35    & 17    & 0     & 0     & 0 \\
    6     & 98    & 0     & 0     & 0     & 0 \\ \bottomrule
    \end{tabular}%
    \;
    \begin{tabular}{rrrrrr}
       \multicolumn{6}{c}{\it CQ Random} \\
    \toprule
        \multicolumn{1}{c}{$i$}  & \multicolumn{1}{c}{Deg} & 
\multicolumn{1}{c}{BIP} & \multicolumn{1}{c}{3-BMIP} & 
\multicolumn{1}{c}{4-BMIP} & \multicolumn{1}{c}{VC-dim} \\
     \midrule
    0     & 0     & 1     & 16    & 49    & 0 \\
    1     & 1     & 17    & 77    & 125   & 20 \\
    2     & 15    & 53    & 90    & 120   & 133 \\
    3     & 38    & 62    & 103   & 74    & 240 \\
    4     & 31    & 63    & 62    & 42    & 106 \\
    5     & 33    & 71    & 47    & 28    & 1 \\
    6     & 382   & 233   & 105   & 62    & 0 \\ \bottomrule
    \end{tabular}%
    \;
       \begin{tabular}{rrrrrr}
    \multicolumn{6}{c}{\it CSP Application} \\
    \toprule
        \multicolumn{1}{c}{$i$}  & \multicolumn{1}{c}{Deg} & 
\multicolumn{1}{c}{BIP} & \multicolumn{1}{c}{3-BMIP} & 
\multicolumn{1}{c}{4-BMIP} & \multicolumn{1}{c}{VC-dim} \\
     \midrule
    0     & 0     & 0     & 596   & 597   & 0 \\
    1     & 0     & 1030  & 459   & 486   & 0 \\
    2     & 596   & 59    & 34    & 7     & 1064 \\
    3     & 1     & 0     & 1     & 0     & 26 \\
    4     & 1     & 0     & 0     & 0     & 0 \\
    5     & 2     & 0     & 0     & 0     & 0 \\
    $>$5   & 490   & 1     & 0     & 0     & 0 \\
    \bottomrule
    \end{tabular}
    \;
    \begin{tabular}{rrrrrr}
    \multicolumn{6}{c}{\it CSP Random} \\
    \toprule
          \multicolumn{1}{c}{$i$}  & \multicolumn{1}{c}{Deg} & 
\multicolumn{1}{c}{BIP} & \multicolumn{1}{c}{3-BMIP} & 
\multicolumn{1}{c}{4-BMIP} & \multicolumn{1}{c}{VC-dim} \\
    \midrule
    0     & 0     & 0     & 0     & 0     & 0 \\
    1     & 0     & 200   & 200   & 238   & 0 \\
    2     & 0     & 224   & 312   & 407   & 220 \\
    3     & 0     & 76    & 147   & 95    & 515 \\
    4     & 12    & 181   & 161   & 97    & 57 \\
    5     & 8     & 99    & 14    & 1     & 71 \\
    $>$5    & 843   & 83    & 29    & 25    & 0 \\
    \bottomrule
    \end{tabular}%
    \;
    \begin{tabular}{rrrrrr}
    \multicolumn{6}{c}{\it CSP Other} \\
    \toprule
        \multicolumn{1}{c}{$i$}  & \multicolumn{1}{c}{Deg} & 
\multicolumn{1}{c}{BIP} & \multicolumn{1}{c}{3-BMIP} & 
\multicolumn{1}{c}{4-BMIP} & \multicolumn{1}{c}{VC-dim} \\
     \midrule
    0     & 0     & 0    & 1    & 6    & 0 \\
    1     & 0     & 7     & 36    & 39    & 0 \\
    2     & 1     & 36     & 23    & 16    & 51 \\
    3     & 5     & 29    & 20    & 21    & 26 \\
    4     & 19     & 10     & 2    & 0    & 0 \\
    5     & 4     & 0    & 0    & 0     & 0 \\
    $>5$     & 53   & 0     & 0     & 0     & 0 \\ \bottomrule
        \vspace{1pt}    
    \end{tabular}
      \caption{Hypergraph properties}
  \label{tab:hg-props-detail}%
\end{table}%

\begin{figure}[htbp]
\centering 
     \fbox{\includegraphics[width=0.45\textwidth]{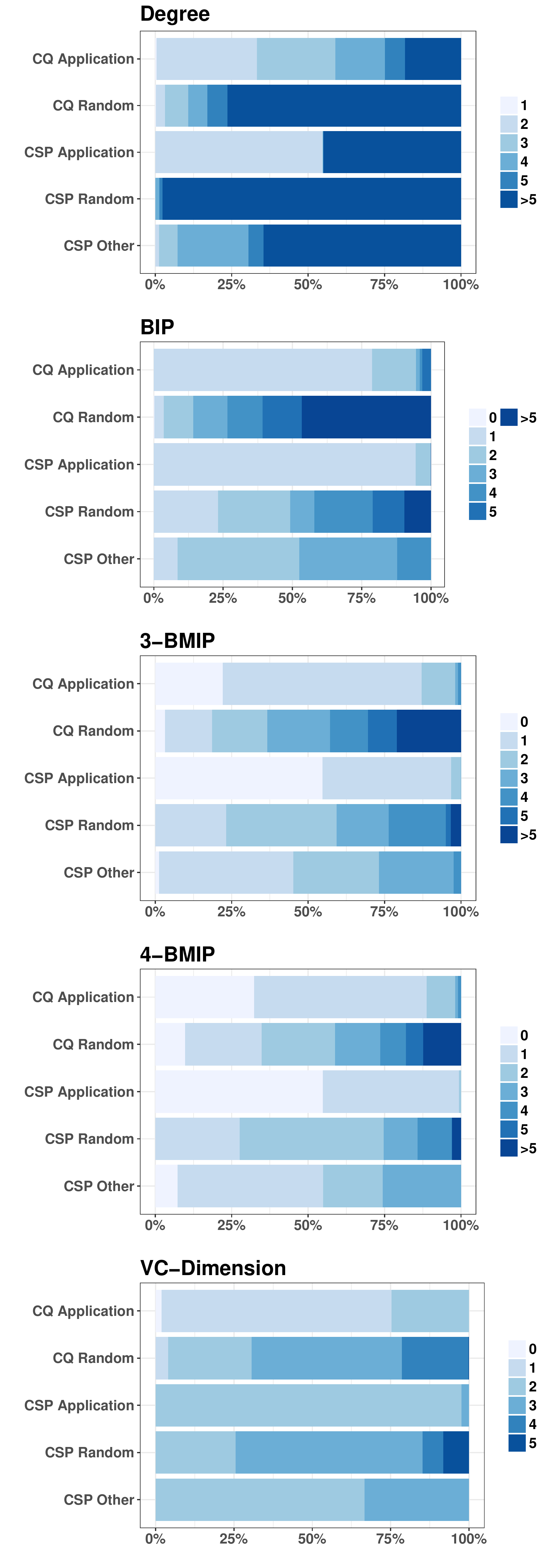}}
    \caption{Hypergraph Properties} 
\label{fig:hg-props-detail}       
\end{figure}      

\begin{table}[t]
  \centering 
       \begin{tabular}{rrrr}
       \multicolumn{4}{c}{\it CQ Application} \\
    \toprule
        \multicolumn{1}{c}{$k$}  & \multicolumn{1}{c}{yes} & 
\multicolumn{1}{c}{no} & \multicolumn{1}{c}{timeout} \\
     \midrule
    $1$     & 454 (0)    & 81 (0)    & 0   \\
     $2$     & 73 (0) &  8 (0)     & 0   \\
     $3$     & 8 (0) &  0     & 0   \\
     \bottomrule
    \end{tabular}%
    \;
    \begin{tabular}{rrrr}
       \multicolumn{4}{c}{\it CQ Random} \\
    \toprule
        \multicolumn{1}{c}{$k$}  & \multicolumn{1}{c}{yes} & 
\multicolumn{1}{c}{no} & \multicolumn{1}{c}{timeout} \\
     \midrule
    $1$     & 36 (0)    & 464 (0)    & 0   \\
     $2$     & 68 (0) & 396 (0)  &  0   \\
    $3$     & 70 (0) & 326 (32)  &  0   \\
    $4$     & 59 (0) & 167 (544)  &  100   \\
    $5$     & 54 (0) & 55 (610)  &  158   \\
    $10$     & 206 (5) & 0  &  7   \\
    $15$     & 7 (0) & 0  &  0   \\
     \bottomrule
    \end{tabular}%
    \;
       \begin{tabular}{rrrr}
       \multicolumn{4}{c}{\it CSP Application} \\
    \toprule
        \multicolumn{1}{c}{$k$}  & \multicolumn{1}{c}{yes} & 
\multicolumn{1}{c}{no} & \multicolumn{1}{c}{timeout} \\
     \midrule
    $1$     & 0    & 1090 (0)    & 0   \\
     $2$     & 29 (0) & 1061 (0)  &  0   \\
    $3$     & 116 (0) & 802 (736)  &  143   \\
    $4$     & 283 (18) & 62 (707)  &  600   \\
    $5$     & 231 (13) & 0  &  431   \\
    $10$     & 261 (0) & 0  &  170   \\
    $15$     & 12 (0) & 0  &  158   \\
     $25$     & 118 (0) & 0  &  40   \\
     $50$     & 40 (0) & 0  &  0   \\
     \bottomrule
    \end{tabular}%
    \;
    \begin{tabular}{rrrr}
       \multicolumn{4}{c}{\it CSP Random} \\
    \toprule
        \multicolumn{1}{c}{$k$}  & \multicolumn{1}{c}{yes} & 
\multicolumn{1}{c}{no} & \multicolumn{1}{c}{timeout} \\
     \midrule
    $1$     & 0    & 863 (0)    & 0   \\
     $2$     & 47 (0) & 816 (1)  &  0   \\
    $3$     & 111 (0) & 602 (1319) &  103   \\
    $4$     & 39 (42) & 160 (1332)  &  506   \\
    $5$     & 136 (59) & 0  &  530   \\
    $10$     & 530 (0) & 0  &  0   \\
     \bottomrule
    \end{tabular}%
    \; 
    \begin{tabular}{rrrr}
       \multicolumn{4}{c}{\it CSP Other} \\
    \toprule
        \multicolumn{1}{c}{$k$}  & \multicolumn{1}{c}{yes} & 
\multicolumn{1}{c}{no} & \multicolumn{1}{c}{timeout} \\
    \midrule
    $1$     & 0    & 82 (1)    & 0   \\
     $2$     & 19 (0) & 55 (219)  &  8   \\
    $3$     & 5 (0) & 11 (1257)  &  47   \\ 
    $4$     & 5 (0) & 2 (943)  &  51   \\
    $5$     & 6 (0) & 1 (0)  &  46   \\
    $10$     & 24 (0) & 0  &  23   \\
    $15$     & 6 (1) & 0  &  17   \\
    $25$     & 7 (10) & 0  &  10   \\
    $50$     & 5 (0) & 0  &  5   \\
    $75$     & 4 (0) & 0  &  1   \\
     \bottomrule
        \vspace{1pt}
    \end{tabular}%
      \caption{HW of instances with average runtime in s}
  \label{tab:hw-detail}%
\end{table}%

\noindent
In Figure~\ref{fig:hw}, the results of our $\hw$-analysis were presented. 
In Table~\ref{tab:hw-detail}
and Figure~\ref{fig:hw-detail}, additional details are provided by distinguishing 
the three classes of CSP instances. As in Figure~\ref{fig:hw}, we also  write the average runtimes in the bars of Figure  \ref{fig:hw-detail}. 
In the tabular presentation in  Table \ref{tab:hw-detail}, this information is given by putting the number of seconds in parentheses.

\begin{figure}[h]
\centering 
     \fbox{\includegraphics[width=0.45\textwidth]{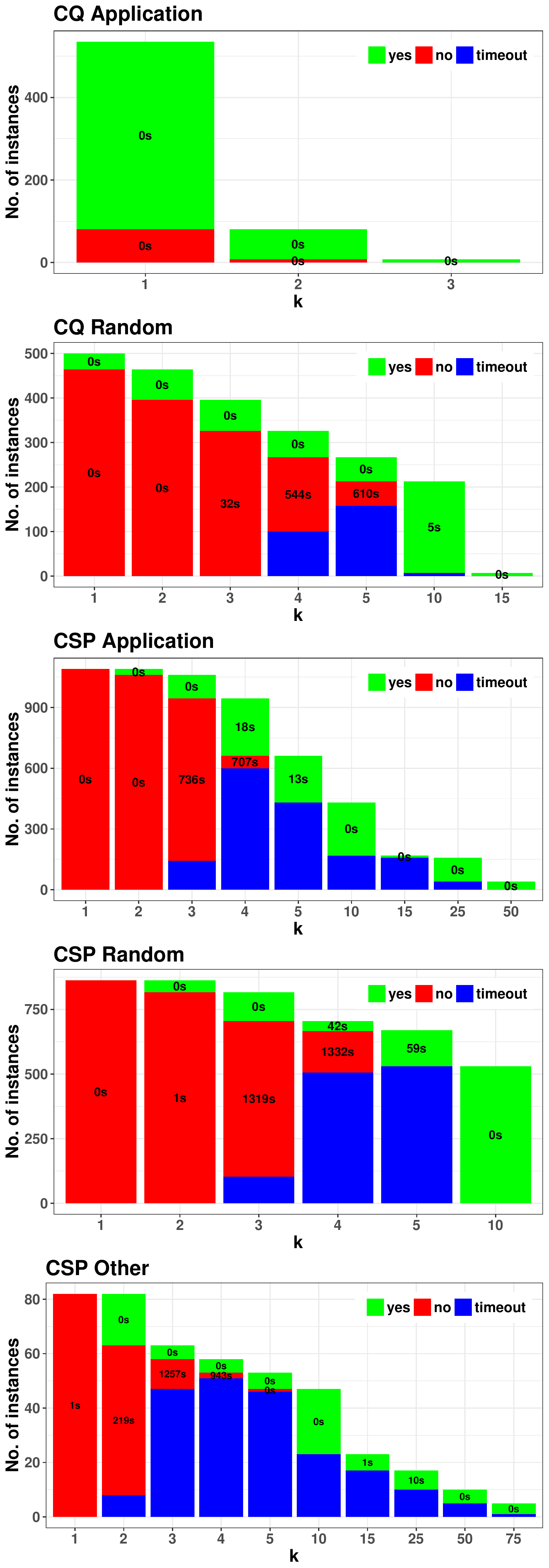}}
    \caption{HW analysis (labels are average runtimes in s)} 
\label{fig:hw-detail}      
\end{figure}   

\clearpage

\section{Further Details for Section \ref{sect:ghw-implementation}}
\label{app:sect-five}

In Table~\ref{tab:ghw}, we gave an overview of the improvements of the width when switching from $\hw$ to $\ghw$. 
In Table \ref{tab:ghw-detail}
and Figure  \ref{fig:ghw-detail}, additional details are provided by distinguishing 
the three classes of CSP instances. The runtimes are given in parentheses in the tabular representation and in 
the bars of the bar chart, respectively. The pseudocode of the GHD-algorithm via balanced separators is given in 
Figure  \ref{fig:rec-algo-detail}.

\begin{table}[htbp]
  \centering 
    \begin{tabular}{lrrr}
       \multicolumn{4}{c}{\it CQ Applicatoin} \\
    \toprule
        \multicolumn{1}{c}{$\hw \ra \ghw$}  & \multicolumn{1}{c}{yes} & 
\multicolumn{1}{c}{no} & \multicolumn{1}{c}{timeout} \\
     \midrule
    $3 \ra 2$     & 0   & 8 (0)    & 0   \\
     \bottomrule
     \multicolumn{4}{c}{\it CQ Random} \\
    \toprule
        \multicolumn{1}{c}{$\hw \ra \ghw$}  & \multicolumn{1}{c}{yes} & 
\multicolumn{1}{c}{no} & \multicolumn{1}{c}{timeout} \\
     \midrule
    $3 \ra 2$     & 0   & 70 (29)    & 0   \\
     $4 \ra 3$    & 0   & 48 (65)    & 11   \\
    $5 \ra 4$     & 0   & 30 (41)    & 24   \\
    $6 \ra 5$     & 0    & 29 (655)    & 40   \\
     \bottomrule
       \multicolumn{4}{c}{\it CSP Application} \\
    \toprule
        \multicolumn{1}{c}{$\hw \ra \ghw$}  & \multicolumn{1}{c}{yes} & 
\multicolumn{1}{c}{no} & \multicolumn{1}{c}{timeout} \\
     \midrule
    $3 \ra 2$     & 0   & 116 (7)    & 0   \\
     $4 \ra 3$    & 0   & 173 (66)    & 110   \\
    $5 \ra 4$     & 0   & 32 (9)    & 199   \\
    $6 \ra 5$     & 8 (41)   & 29 (458)    & 74   \\
     \bottomrule
       \multicolumn{4}{c}{\it CSP Random} \\
    \toprule
        \multicolumn{1}{c}{$\hw \ra \ghw$}  & \multicolumn{1}{c}{yes} & 
\multicolumn{1}{c}{no} & \multicolumn{1}{c}{timeout} \\
     \midrule
    $3 \ra 2$     & 0   & 111 (0)    & 0   \\
     $4 \ra 3$    & 0   & 38 (0)    & 1   \\
     $5 \ra 4$     & 0  & 86 (4)    & 50   \\
    $6 \ra 5$     & 9 (221)  & 121 (47)    & 141   \\
     \bottomrule
       \multicolumn{4}{c}{\it CSP Other} \\
    \toprule
        \multicolumn{1}{c}{$\hw \ra \ghw$}  & \multicolumn{1}{c}{yes} & 
\multicolumn{1}{c}{no} & \multicolumn{1}{c}{timeout} \\
    \midrule
    $3 \ra 2$     & 0   & 4 (14)    & 1   \\
     $4 \ra 3$    & 0   & 3 (120)    & 2   \\
    $5 \ra 4$     & 0   & 0   & 6   \\
    $6 \ra 5$     & 1 (2)   & 1 (2)   & 6   \\    \bottomrule
    \vspace{1pt}
    \end{tabular}%
        \caption{GHW of instances with average runtime in s}
  \label{tab:ghw-detail}%
\end{table}%

\begin{figure}[htbp]
\centering 
     \fbox{\includegraphics[width=0.45\textwidth]{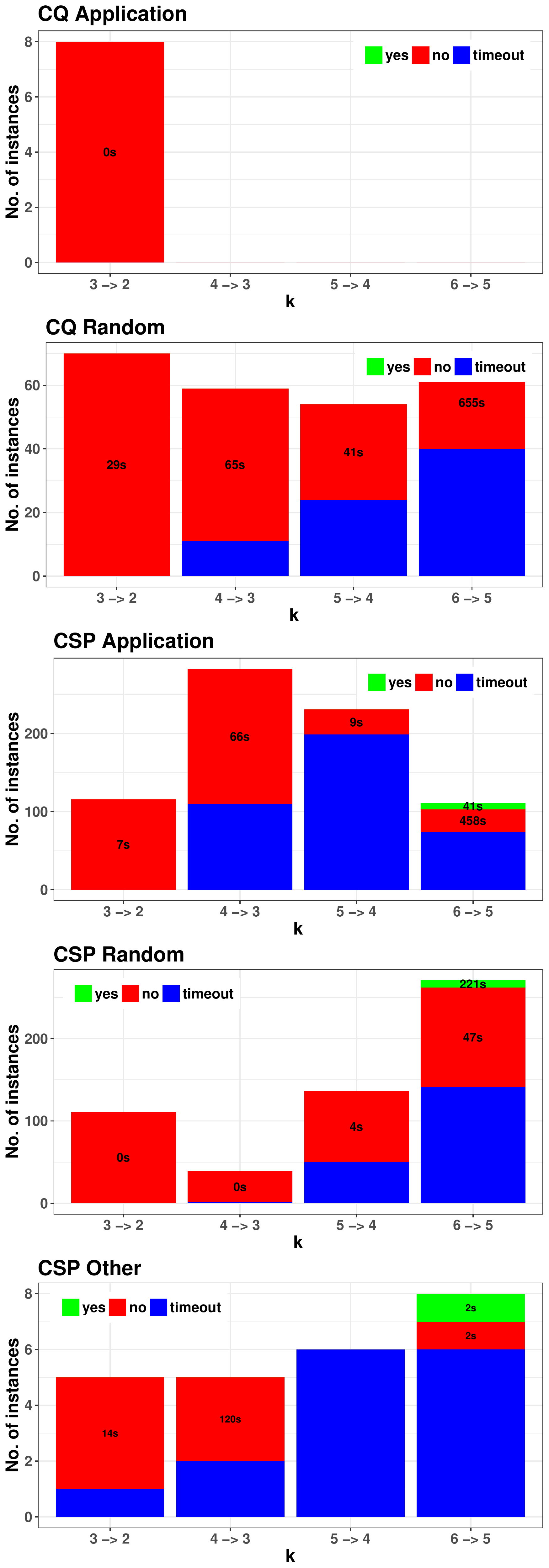}}
    \caption{GHW analysis (labels are average runtime in s)} 
\label{fig:ghw-detail}     
\end{figure}

\begin{figure}[t]
\fbox{
\begin{minipage}{\textwidth}
\begin{small}
\begin{center} 

\begin{tabbing}
xx \= xxx \= xxx \= xxx \= xxx \= xxx \= xxx \= xxx \= \kill
{\bf ALGORITHM} Find\_GHD\_via\_balancedSeparators \\
// high-level sketch  \\[1.1ex]
{\bf Input:} \hskip 7pt  hypergraph $H'$, integer $k\geq 0$. \\
{\bf Output:}  a GHD $\left<T,B_u,\lambda_u\right>$ of width $\leq k$ if exists, \\ 
\hskip 32pt   ``Reject'', otherwise.  \\[1.2ex]
 
{\bf Procedure} Find\_GHD ($H$: Hypergraph, $\Sp$: Set of Special-Edges) \\
{\bf begin}   \\
// 1.\ Stop if there are at most two special edges left: \\
\> {\bf If} $E(H) = \emptyset$ and $|\Sp| \leq 2$  \\
\> \> {\bf then} {\bf return} a GHD having a node for each $s \in \Sp$ with \\
\> \> \> \> \> $B_u := s$ and $\lambda_u := \{ s \}$;  
\\[1.1ex]
// 2.\ Find a balanced separator $\lambda_u$ for $H$: \\
\> {\bf Guess} a balanced separator $\lambda_u \subseteq E(H')$ with $|\lambda_u| \leq k$  \\
\> \> \> for root $u$ of a GHD of $H$ such that: \\
\> \> \> $\bullet \quad B(\lambda_u) \not\in \Sp$ \\
\> \> \> $\bullet \quad B(\lambda_u) \subseteq V(H)$
\\[1.1ex]
\> {\bf If} no such balanced separator exists \\
\> \> {\bf then} {\bf return} Reject; \\ 
\> $B_u := B(\lambda_u)$; 
\\[1.1ex]                
// 3.\ Split $H$ into connected components $C_1, \dots, C_\ell$ w.r.t.\ $\lambda_u$: \+ \\
$V_u := V(H) \setminus B_u$; \\
$E_u := \{ e  \cap V_u  \mid e \in (E(H)\cup \Sp) \}$; \\
 {\bf Compute} the connected components of $(V_u, E_u)$ \\
{\bf Let} the connected components be denoted by $C_1, \dots, C_\ell$; \- \\[1.1ex]                   
// 4.\ Build the pairs $\left<H_i,\Sp_i\right>$ for each connected component $C_i$: \+ \\
 {\bf For every}  $i \in \{1, \dots, \ell\}$ {\bf do}\+ \\
 $E_i := \{e \mid e\in E(H)$ and $e \cap C_i \neq \emptyset\}$; \\ %
  $V_i := V(E_i) \cup B_u$; \\
{\bf Let} $H_i$ be the hypergraph $(V_i,E_i)$; \\
 $\Sp_i := \{ s \in \Sp \mid s \cap C_i \neq \emptyset\} \cup \{B_u\}$; \- \\
{\bf od}; \- \\[1.1ex]                   

// 5.\ Call Find\_GHD($H_i$, $\Sp_i$) for each pair  $\left<H_i,\Sp_i\right>$: \+ \\
{\bf  For each} $i \in \{1, \dots, \ell\}$ $T_i :=$ Find\_GHD ($H_i$, $\Sp_i$) \\[1.1ex]

{\bf  If } $\exists i$ s.t. recursive call returns Reject \\
 \> {\bf then} {\bf  return} Reject; \- \\[1.1ex]
 
// 6.\ Create and return a new GHD for $H$ having $B_u$ and $\lambda_u$ as root: \+ \\
{\bf Create} a new root $u$ with $B_u$ and $\lambda_u$; \\
{\bf Reroot} all $T_i$ at the node $t_i$ where $\lambda_{t_i} = \{ B_u \}$; \\
{\bf Attach} all children of $t_i$ to $u$; \\
{\bf return} the new GHD rooted at $u$; \- \\
{\bf end}  \\[1.1ex]
{\bf begin} (* Main *)  \\
  \> {\bf  return} Find\_GHD ($H'$, $\emptyset$); \\
{\bf end}  

\end{tabbing}
\end{center}
\end{small}
\end{minipage}
}

\vspace{-1.8ex}

\caption{Recursive GHD-algorithm via balanced separators}
 \label{fig:rec-algo-detail}
 \end{figure}

\clearpage

\section{Further Details for Section  \ref{sect:fractionally-improved}}
\label{app:sect-six}

In Table~\ref{tab:fhw-improve}, we presented the achieved improvements of the width by switching from integral covers of HDs to fractional covers. 
In Tables~\ref{tab:fhw-simple-detail} and ~\ref{tab:fhw-frac-detail} 
and Figures~\ref{fig:fhw-simple-detail} and~\ref{fig:fhw-frac-detail},
additional details are provided by distinguishing 
the three classes of CSP instances and by showing the results of the two algorithms
SimpleImproveHD (in Table~\ref{tab:fhw-simple-detail} and Figure~\ref{fig:fhw-simple-detail})
and FracImproveHD (in Table~\ref{tab:fhw-frac-detail} and Figure~\ref{fig:fhw-frac-detail})
separately.

\begin{table}[htbp]
  \centering 
    \begin{tabular}{crrrrr}
       \multicolumn{6}{c}{\it CQ Application} \\
    \toprule
        \multicolumn{1}{c}{$\hw$}  & \multicolumn{1}{c}{$\geq 1$} & \multicolumn{1}{c}{$\geq 0.5$} & \multicolumn{1}{c}{$\geq 0.1$} &
\multicolumn{1}{c}{no} & \multicolumn{1}{c}{timeout} \\
     \midrule
    $2$     & 0   & 17    & 0 & 56 & 0   \\
    $3$     & 0   & 0    & 0 & 8 & 0   \\
     \bottomrule
     \multicolumn{6}{c}{\it CQ Random} \\
    \toprule
        \multicolumn{1}{c}{$\hw$}  & \multicolumn{1}{c}{$\geq 1$} & \multicolumn{1}{c}{$\geq 0.5$} & \multicolumn{1}{c}{$\geq 0.1$} &
\multicolumn{1}{c}{no} & \multicolumn{1}{c}{timeout} \\
     \midrule
   2&0&24&7&37&0\\
3&6&18&10&36&0\\
4&8&19&2&30&0\\
5&14&8&5&27&0\\
6&12&15&9&35&0\\
     \bottomrule
       \multicolumn{6}{c}{\it CSP Application} \\
    \toprule
        \multicolumn{1}{c}{$\hw$}  & \multicolumn{1}{c}{$\geq 1$} & \multicolumn{1}{c}{$\geq 0.5$} & \multicolumn{1}{c}{$\geq 0.1$} &
\multicolumn{1}{c}{no} & \multicolumn{1}{c}{timeout} \\
     \midrule
     $2$     & 0   & 0    & 0 & 29 & 0   \\
    $3$     & 0   & 0    & 0 & 116 & 0   \\
     $4$    & 0   & 7    & 0 & 276 & 0   \\
    $5$     & 0   & 5   & 0 & 226 & 0   \\
    $6$     & 0   & 6    & 0 & 105 & 0   \\
     \bottomrule
       \multicolumn{6}{c}{\it CSP Random} \\
    \toprule
        \multicolumn{1}{c}{$\hw$}  & \multicolumn{1}{c}{$\geq 1$} & \multicolumn{1}{c}{$\geq 0.5$} & \multicolumn{1}{c}{$\geq 0.1$} &
\multicolumn{1}{c}{no} & \multicolumn{1}{c}{timeout} \\
     \midrule
   $2$     & 0   & 0    & 18 & 29 & 0   \\
    $3$     & 6   & 86    & 15 & 4 & 0   \\
     $4$    & 1   & 29    & 9 & 0 & 0   \\
    $5$     & 6   & 1   & 6 & 123 & 0   \\
    $6$     & 0   & 39    & 70 & 162 & 0   \\
     \bottomrule
       \multicolumn{6}{c}{\it CSP Other} \\
    \toprule
        \multicolumn{1}{c}{$\hw$}  & \multicolumn{1}{c}{$\geq 1$} & \multicolumn{1}{c}{$\geq 0.5$} & \multicolumn{1}{c}{$\geq 0.1$} &
\multicolumn{1}{c}{no} & \multicolumn{1}{c}{timeout} \\
    \midrule
   $2$     & 0   & 0    & 0 & 20 & 0   \\
    $3$     & 0   & 0    & 0 & 5 & 0   \\
     $4$    & 0   & 0    & 0 & 5 & 0   \\
    $5$     & 0   & 0   & 0 & 6 & 0   \\ 
    $6$     & 0   & 0    & 1 & 7 & 0   \\
     \bottomrule
     \vspace{1pt}
    \end{tabular}%
        \caption{SimpleImproveHD of instances}
  \label{tab:fhw-simple-detail}%
\end{table}%

\begin{figure}[htbp]
\centering 
     \fbox{\includegraphics[width=0.45\textwidth]{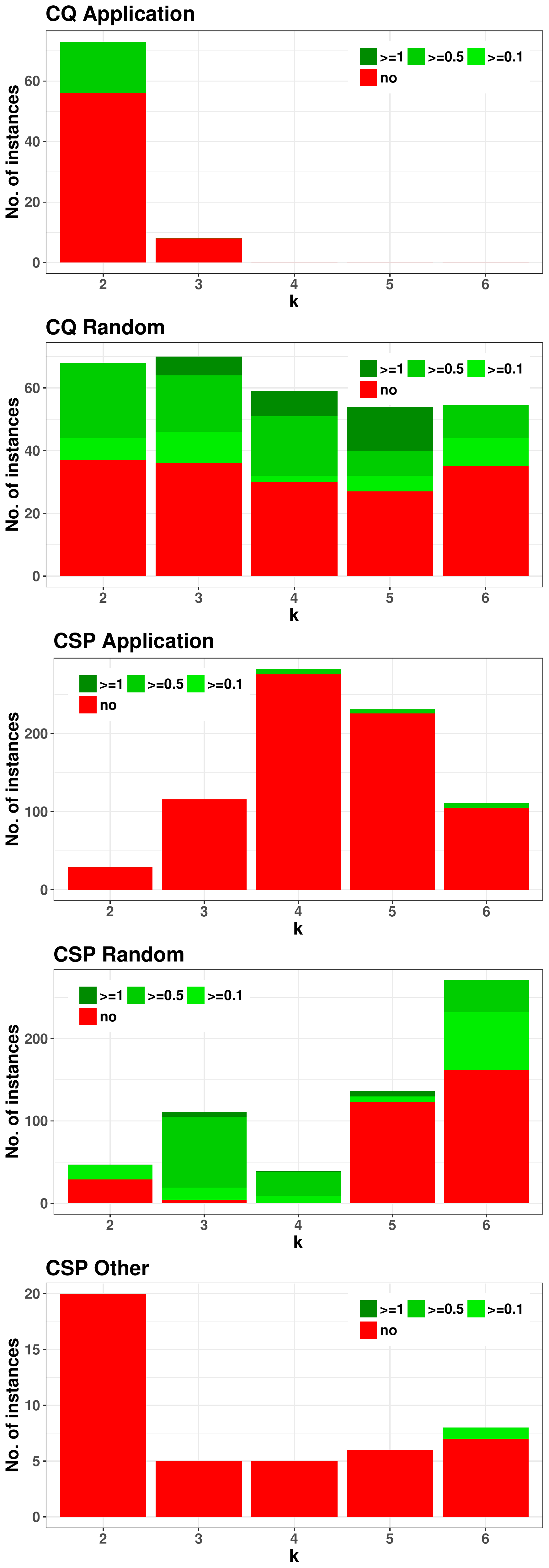}}
    \caption{SimpleImproveHD analysis} 
\label{fig:fhw-simple-detail}     
\end{figure}  

\clearpage

\begin{table}[htbp]
  \centering 
    \begin{tabular}{crrrrr}
       \multicolumn{6}{c}{\it CQ Application} \\
    \toprule
        \multicolumn{1}{c}{$\hw$}  & \multicolumn{1}{c}{$\geq 1$} & \multicolumn{1}{c}{$\geq 0.5$} & \multicolumn{1}{c}{$\geq 0.1$} &
\multicolumn{1}{c}{no} & \multicolumn{1}{c}{timeout} \\
     \midrule
    $2$     & 0   & 20    & 0 & 53 & 0   \\
    $3$     & 0   & 0    & 0 & 8 & 0   \\
     \bottomrule
     \multicolumn{6}{c}{\it CQ Random} \\
    \toprule
        \multicolumn{1}{c}{$\hw$}  & \multicolumn{1}{c}{$\geq 1$} & \multicolumn{1}{c}{$\geq 0.5$} & \multicolumn{1}{c}{$\geq 0.1$} &
\multicolumn{1}{c}{no} & \multicolumn{1}{c}{timeout} \\
     \midrule
   $2$     & 0&26&7&35&0   \\
    $3$     & 7 & 29 & 5 & 6& 23   \\
     $4$    & 10 & 23 & 1 & 6 & 19   \\
    $5$     & 12 & 19 & 4 & 1 & 18   \\
    $6$     & 14 & 26 & 7 & 0 & 22   \\
     \bottomrule 
       \multicolumn{6}{c}{\it CSP Application} \\
    \toprule
        \multicolumn{1}{c}{$\hw$}  & \multicolumn{1}{c}{$\geq 1$} & \multicolumn{1}{c}{$\geq 0.5$} & \multicolumn{1}{c}{$\geq 0.1$} &
\multicolumn{1}{c}{no} & \multicolumn{1}{c}{timeout} \\
     \midrule
     $2$     & 0   & 0    & 0 & 29 & 0   \\
    $3$     & 0   & 0    & 0 & 116 & 0   \\
     $4$    & 0   & 21    & 0 & 2 & 260   \\
    $5$     & 0   & 83   & 0 & 1 & 147   \\
    $6$     & 1   & 28    & 0 & 2 & 80   \\
     \bottomrule
       \multicolumn{6}{c}{\it CSP Random} \\
    \toprule
        \multicolumn{1}{c}{$\hw$}  & \multicolumn{1}{c}{$\geq 1$} & \multicolumn{1}{c}{$\geq 0.5$} & \multicolumn{1}{c}{$\geq 0.1$} &
\multicolumn{1}{c}{no} & \multicolumn{1}{c}{timeout} \\
     \midrule
   $2$     & 0   & 0    & 22 & 25 & 0   \\
    $3$     & 7   & 87    & 16 & 1 & 0   \\
     $4$    & 1   & 37    & 1 & 0 & 0   \\
    $5$     & 6   & 24   & 55 & 0 & 51   \\
    $6$     & 12   & 94    & 87 & 3 & 75   \\
     \bottomrule
       \multicolumn{6}{c}{\it CSP Other} \\
    \toprule
        \multicolumn{1}{c}{$\hw$}  & \multicolumn{1}{c}{$\geq 1$} & \multicolumn{1}{c}{$\geq 0.5$} & \multicolumn{1}{c}{$\geq 0.1$} &
\multicolumn{1}{c}{no} & \multicolumn{1}{c}{timeout} \\
    \midrule
   $2$     & 0   & 0    & 0 & 18 & 1   \\
    $3$     & 0   & 0    & 0 & 4 & 1   \\
     $4$    & 0   & 0    & 0 & 1 & 4   \\
    $5$     & 0   & 0   & 0 & 1 & 5   \\ 
    $6$     & 1   & 1    & 1 & 1 & 4   \\
     \bottomrule
     \vspace{1pt}
    \end{tabular}%
        \caption{FracImproveHD of instances} 
  \label{tab:fhw-frac-detail}%
\end{table}%

\begin{figure}[htbp]
\centering 
     \fbox{\includegraphics[width=0.45\textwidth]{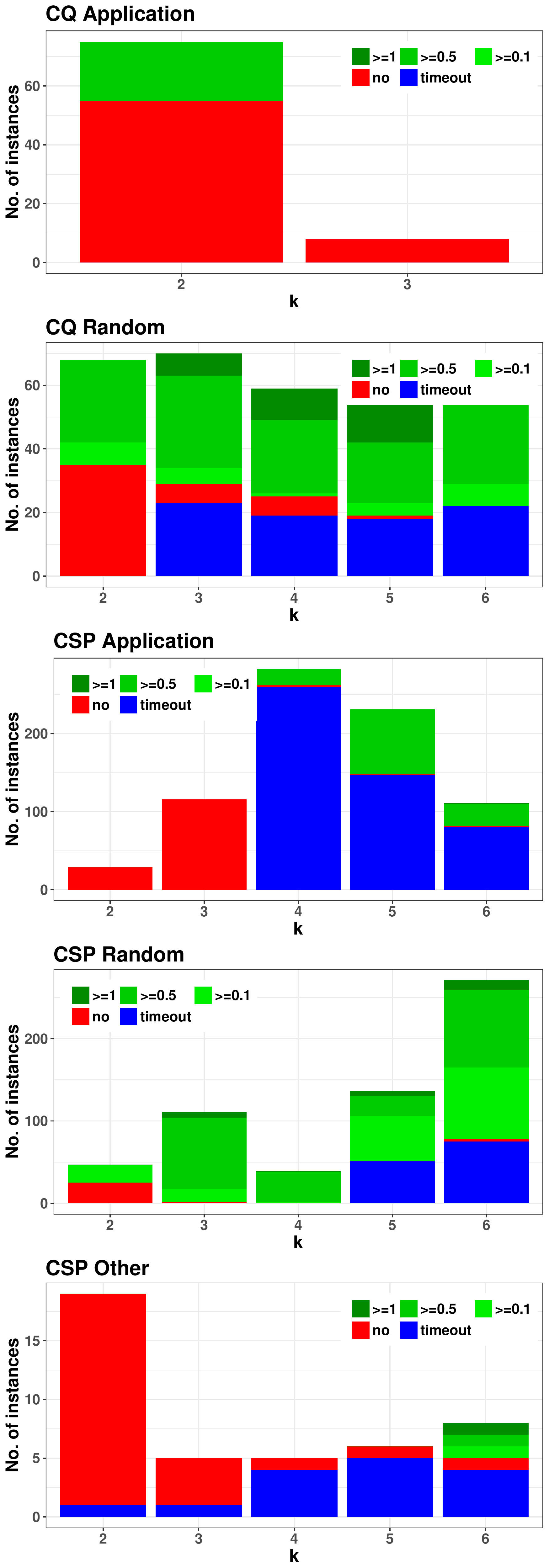}}
    \caption{FracImproveHD analysis} 
\label{fig:fhw-frac-detail}     
\end{figure}

\end{document}